\let\accentvec\vec
\let\vec\accentvec
\newcommand{\all}{\forall}
\newcommand{\eq}{\Leftrightarrow}
\newcommand{\impl}{\Rightarrow}
\newcommand{\eg}{{e.\,g.}}
\newcommand{\ama}{\[ \begin{aligned}}
\newcommand{\ema}{\end{aligned} \]}
\newcommand{\set}[1]{\left\{#1\right\}}
\newcommand{\setc}[2]{\left\{\left.#1\ \right|\ #2\right\}}
\newcommand{\eps}{\varepsilon}
\newcommand{\R}{\mathbb{R}}
\tikzstyle{request}=[draw, circle, color=black, fill=black, inner sep = 0.1em, minimum size = 0.55cm, text=white]
\tikzstyle{server}=[draw, circle, color=black, fill=none, inner sep = 0.1em, minimum size = 0.65cm, line width = 1mm]
\tikzstyle{matching}=[ultra thick,blue,->,>=stealth]
\newcommand{\ie}{{i.\,e.}}
\definecolor{Xmatheonlightblue}{HTML}{2B66BF}
\definecolor{Xmatheonblue}{HTML}{2659A6} 
\definecolor{Xmatheondarkblue}{HTML}{1A3B74} 
\definecolor{Xmatheonverydarkblue}{HTML}{0D1F39} 
\definecolor{darkgreen}{rgb}{0.0,0.6,0.0}
\definecolor{Apricot}     {cmyk}{0,0.32,0.52,0}
\definecolor{Aquamarine}  {cmyk}{0.82,0,0.30,0}
\definecolor{CadetBlue}   {cmyk}{0.62,0.57,0.23,0}
\definecolor{DarkGray}    {gray}{0.2}
\definecolor{DarkGreen}   {rgb}{0,0.5,0}
\definecolor{ForestGreen} {cmyk}{0.91,0,0.88,0.12}
\definecolor{Gold}        {rgb}{1.,0.84,0.}
\definecolor{Goldenrod}   {cmyk}{0,0.10,0.84,0}
\definecolor{IndianRed}   {rgb}{0.8,0.36,0.36}
\definecolor{Lavender}    {cmyk}{0,0.48,0,0}
\definecolor{LemonChiffon}{rgb}{1.,0.98,0.8}
\definecolor{LightBlue}   {rgb}{0.68,0.85,0.9}
\definecolor{LightCyan}   {rgb}{0.88,1.,1.}
\definecolor{LightGray}   {gray}{0.92}
\definecolor{LightYellow} {rgb}{1.,1.,0.88}
\definecolor{Melon}       {cmyk}{0,0.46,0.50,0}
\definecolor{NavyBlue}    {cmyk}{0.94,0.54,0,0}
\definecolor{Orange}      {rgb}{1.,0.65,0.}
\definecolor{PaleGreen}   {rgb}{0.6,0.98,0.6}
\definecolor{PaleGreenB}  {rgb}{0.9,1,0.9}
\definecolor{Peach}       {cmyk}{0,0.50,0.70,0}
\definecolor{PeachPuff}   {rgb}{1.0,0.85,0.73}
\definecolor{PineGreen}   {cmyk}{0.92,0,0.59,0.25}
\definecolor{Pink}        {rgb}{1.,0.75,0.8}
\definecolor{RoyalBlue}   {cmyk}{1,0.50,0,0}
\definecolor{SeaGreen}    {cmyk}{0.69,0,0.50,0}
\definecolor{Salmon}      {cmyk}{0,0.53,0.38,0}
\definecolor{Sepia}       {cmyk}{0,0.83,1,0.70}
\definecolor{SlateBlue}   {rgb}{0.42,0.35,0.8}
\definecolor{Thistle}     {rgb}{0.85,0.75,0.85}
\definecolor{Turquoise}   {cmyk}{0.85,0,0.20,0}
\definecolor{Violet}      {cmyk}{0.79,0.88,0,0}
\definecolor{YellowOrange}{cmyk}{0,0.42,1,0}
\tikzstyle{node}=[draw, circle, color=black, fill=white, inner sep = 0.1em, minimum size = 0.85cm]
\tikzstyle{blacknode}=[circle, fill=black, text=white, inner sep = 0.1em, minimum size = 0.6cm]
\tikzstyle{variableSource}=[draw, circle, color=black, fill=black, inner sep = 0.1em, minimum size = 0.75cm, text=white]
\tikzstyle{variableSink}=[draw, circle, color=black, fill=none, inner sep = 0.1em, minimum size = 0.75cm]
\tikzstyle{variableNode}=[draw, circle, color=black, fill=gray, inner sep = 0.1em, minimum size = 0.75cm]
\tikzstyle{shortEdge}=[ultra thick,-,>=stealth,dashed]
\tikzstyle{longEdge}=[ultra thick,-,>=stealth]
\tikzstyle{edge}=[ultra thick,-,>=stealth]
\newcommand{\splitvariableblock}[3]{
 \begin{scope}[xshift=#1cm,yshift=#2cm]
  \draw[draw=black, fill=LightGray] (-2.25,0.50) rectangle (2.25,-3.5);
  \node[variableSource,label={[label distance=-1mm]135:\tiny $-1$}] (ta#3) at (-0.5,0) {$d_{#3}^-$};
  \node[variableSource,label={[label distance=-1mm]45:\tiny $-1$}] (tb#3) at (0.5,0) {$\overline{d}_{#3}^-$};
  \node[variableSink,label={[label distance=-1mm]45:\tiny $2$}] (s#3) at (0,-1.5) {$d_{#3}^+$};
  \node[variableNode] (a#3) at (-1.75,0) {$x_{#3}^1$};
  \node[variableNode] (b#3) at (1.75,0) {$\overline{x}_{#3}^1$};
  \node[variableNode] (c#3) at (-1.75,-1.5) {$x_{#3}^2$};
  \node[variableNode] (d#3) at (1.75,-1.5) {$\overline{x}_{#3}^2$};
  \draw[shortEdge] (a#3) -- (c#3);
  \draw[shortEdge] (b#3) -- (d#3);
  \draw[longEdge] (s#3) -- (c#3);
  \draw[longEdge] (s#3) -- (d#3);
  \draw[longEdge] (a#3) -- (ta#3);
  \draw[longEdge] (b#3) -- (tb#3);
  \node[anchor=center] (text) at (0,-0.75) {\small Variable $x_{#3}$};
  \node[variableSource,label={[label distance=-1mm]45:$*$}] (tta#3) at (-1.75,-3) {$x_{#3}^-$};
  \node[variableSource,label={[label distance=-1mm]135:$*$}] (ttb#3) at (1.75,-3) {$\overline{x}_{#3}^-$};
  \draw[longEdge] (c#3) -- (tta#3);
  \draw[longEdge] (d#3) -- (ttb#3);
 \end{scope}
}
\newcommand{\extvariableblock}[3]{
 \begin{scope}[xshift=#1cm,yshift=#2cm]
  \draw[draw=black, fill=LightGray] (-3.0,0.90) rectangle (3.0,-2.75);
  \node[variableSource,label={[label distance=-1mm]90:\tiny $-\frac{1}{2}(C^2+C)$}, minimum size=0.6cm] (ta#3) at (-1.25,0) {\small $d_{#3}^-$};
  \node[variableSource,label={[label distance=-1mm]90:\tiny $-\frac{1}{2}(C^2+C)$}, minimum size=0.6cm] (tb#3) at (1.25,0) {\small $\overline{d}_{#3}^-$};
  \node[variableSink,label={[label distance=-1mm]90:\tiny $C$}, minimum size=0.6cm] (s#3) at (0,-2.0) {\small $d_{#3}^+$};
  \node[variableSink,label={[label distance=-1mm]270:\tiny $C^2$}, minimum size=0.6cm] (ss#3) at (0,0) {\small $\hat{d}_{#3}^+$};
  \node[variableNode, minimum size=0.6cm] (a#3) at (-2.5,0) {\small $x_{#3}^1$};
  \node[variableNode, minimum size=0.6cm] (b#3) at (2.5,0) {\small $\overline{x}_{#3}^1$};
  \node[variableNode, minimum size=0.6cm] (c#3) at (-2.5,-2.0) {\small  $x_{#3}^2$};
  \node[variableNode, minimum size=0.6cm] (d#3) at (2.5,-2.0) {\small  $\overline{x}_{#3}^2$};
  \draw[longEdge,thick,dashed] (a#3) -- (c#3);
  \draw[longEdge,thick,dashed] (b#3) -- (d#3);
  \draw[longEdge,thick,dashed] (s#3) -- (c#3);
  \draw[longEdge,thick,dashed] (s#3) -- (d#3);
  \draw[longEdge] (ss#3) -- (ta#3);
  \draw[longEdge] (ss#3) -- (tb#3);
  \draw[longEdge,thick,dashed] (a#3) -- (ta#3);
  \draw[longEdge,thick,dashed] (b#3) -- (tb#3);
  \node[anchor=center] (text) at (0,-1.0) {\small Variable $x_{#3}$};
 \end{scope}
}
\DeclareMathOperator{\excess}{ex}
\newcommand{\orient}[1]{\overrightarrow{#1}}
\newcommand{\orientN}{\overrightarrow{N}}
\begin{document}

\bibliographystyle{plain}

\title{Graph Orientation and Flows Over Time\thanks{Supported by the DFG Priority Program ``Algorithms for Big Data'' (SPP 1736) and by the DFG Research Center \textsc{Matheon} ``Mathematics for key technologies'' in Berlin. An extended abstract will appear in the proceedings of the 25th International Symposium on Algorithms and Computation (ISAAC '14).}
}

\titlerunning{The Price of Orientation: On the Effects of Directing a Dynamic Flow Network}

\author{Ashwin Arulselvan\inst{1}\thanks{The work was performed while the author was working at TU Berlin.} \and Martin Gro\ss\inst{2} \and Martin Skutella\inst{2}}

\institute{Department of Management Science, University of Strathclyde\\ \email{ashwin.arulselvan@strath.ac.uk}\and
Institut f\"ur Mathematik, TU Berlin, Str. des 17.~Juni 136, 10623 Berlin, Germany\\ \email{gross,skutella@math.tu-berlin.de}}

\authorrunning{A. Arulselvan, M. Gro\ss\ and M. Skutella}

\maketitle

\begin{abstract}
Flows over time are used to model many real-world logistic and routing problems. The networks underlying such problems -- streets, tracks, etc. -- are inherently undirected and directions are only imposed on them to reduce the danger of colliding vehicles and similar problems. Thus the question arises, what influence the orientation of the network has on the network flow over time problem that is being solved on the oriented network. In the literature, this is also referred to as the \emph{contraflow} or \emph{lane reversal} problem.

We introduce and analyze the \emph{price of orientation}: How much flow is lost in any orientation of the network if the time horizon remains fixed? We prove that there is always an orientation where we can still send $\frac13$ of the flow and this bound is tight. For the special case of networks with a single source or sink, this fraction is~$\frac12$ which is again tight. We present more results of similar flavor and also show non-approximability results for finding the best orientation for single and multicommodity maximum flows over time.
\end{abstract}

\section{Introduction}
\label{sec:intro}
 
Robbins~\cite{Ro:39} studied the problem of orienting streets as early as 1939, motivated by the problem of controlling congestion by making streets of a city one-way during the weekend. He showed that a strongly connected digraph could be obtained by orienting the edges of an undirected graph if and only if it is 2-edge connected. 
 
The problem of prescribing or changing the direction of road lanes is a strategy employed to mitigate congestion during an emergency situation or at rush hour. This is called a \emph{contraflow} problem (or sometimes \emph{reversible flow} or \emph{lane re\-ver\-sal} problem). Contraflows are an important tool for hurricane evacuation \cite{Wo:01}, and in that context the importance of modeling time has become prevalent in the past decade \cite{WoUrLe:02}. It is also employed to handle traffic during rush hours~\cite{MatEtAl:11}.

\emph{Flows over time} (also referred to as \emph{dynamic flows}) have been introduced by Ford and Fulkerson~\cite{FoFu:62} and extend the classic notion of static network flows. They can model a time aspect and are therefore better suited to represent real-world phenomena such as traffic, production flows or evacuations. For the latter, \emph{quickest} flows (over time) are the model of choice.  They are based on the idea that a given number of individuals should leave a dangerous area as quickly as possible~\cite{BDK93,FLTAR98}. Such an evacuation scenario is modeled by a network, with nodes representing locations. Nodes containing evacuees are denoted as \emph{sources} while the network's \emph{sinks} model safe areas or exits. For networks with multiple sources and sinks, quickest flows are also referred to as \emph{quickest transshipments}~\cite{Ho:95} and the problem of computing them is called an \emph{evacuation problem}~\cite{Tjandra2003}. A strongly polynomial algorithm for the quickest flow problem was described in~\cite{HOPTAR00}.
For a more extensive introduction to flows over time, see~\cite{-S09}. 

In this paper, we are interested in combining the orientation of a network with flows over time -- we want to orient the network such that the orientation is as beneficial as possible for the flow over time problem. We will assume that we can orient edges in the beginning, and cannot change the orientation afterwards. The assumption is reasonable in an evacuation setting as altering the orientation in the middle of an evacuation process can be difficult or even infeasible, depending on the resources available. We also assume that each edge has to be routed completely in one direction -- but this will not impose any restriction to our modeling abilities, as we can model lanes with parallel edges if we want to orient them individually.

If there is only a single source and sink, we can apply the algorithm of Ford and Fulkerson~\cite{FoFu:62} to obtain an orientation and a solution. Furthermore, it was shown that finding the best orientation for a quickest flow problem with multiple sources and sinks is NP-hard ~\cite{KiSh:05,RebEtAl:10}. Due to the hardness of the problem, heuristic and simulation tools are predominantly used in practice~\cite{KiSh:05,TuZi:04,TuZi:06,Wo:01}.

\paragraph{\textbf{Our Contribution.}}
In Section~\ref{sec:existence} we study the \emph{price of orientation} for networks with single and multiple sources and sinks, \ie, we deal with the following questions: How much flow is lost in any orientation of the network given a fixed time horizon? And how much longer do we need in any orientation to satisfy all supplies and demands, compared to the undirected network?

To our knowledge, the price of orientation has not been studied for flows over time so far. It follows from the work of Ford and Fulkerson~\cite{FoFu:62} that for $s$-$t$-flows over time the price of orientation is~$1$: Ford and Fulkerson proved that a maximum flow over time can be obtained by temporally repeating a static min-cost flow and thus uses every edge in one direction only. The latter property no longer holds if there is more than one source or sink; see Fig.~\ref{fig:qtfao}.

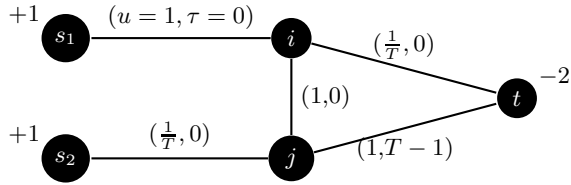
\begin{figure}[hbt]
\begin{center}
\begin{tikzpicture}[yscale=0.8,>=stealth]
	\node (s2) at (6,5)   [scale=1,circle,fill,minimum size=4mm,very thick,label={[label distance=-1mm]160: $+1$}] {\textcolor{white}{$s_2$}};	
	\node (s1) at (6,7)   [scale=1,circle,fill,minimum size=4mm,very thick,label={[label distance=-1mm]160: $+1$}] {\textcolor{white}{$s_1$}};	
	\node (t) at (12,6)   [scale=1,circle,fill,minimum size=4mm,very thick, label={[label distance=-1mm]20: $-2$}] {\textcolor{white}{$t$}};	
	\node (j) at (9,5)   [scale=1,circle,fill,minimum size=4mm] {\textcolor{white}{$j$}};	
	\node (i) at (9,7)   [scale=1,circle,fill,minimum size=4mm] {\textcolor{white}{$i$}};
	\draw [-, thick] (s1)--(i) node [midway, above]{$(u=1,\tau=0)$};
	\draw [-, thick] (s2)--(j)node [midway, above]{($\frac{1}{T},0$)};
	\draw [-, thick] (j)--(i)node [midway, right]{(1,0)};
	\draw [-, thick] (j)--(t)node [midway, below]{(1,$T-1$)};
	\draw [-, thick] (t)--(i)node [midway, above]{($\frac{1}{T},0$)};
  \end{tikzpicture}
 \end{center}

 \vspace{-4ex}
 \caption{An instance with time horizon~$T$ where flow has to use edge $\{i,j\}$ in both directions: At most $\frac{1}{T}$ units of flow can reach the sink via path $s_2,j,t$ due to the capacity of edge $\{s_2,j\}$ and the transit time of edge $\{j,t\}$. Thus, a meaningful amount of flow from source $s_2$ can only be sent via path $s_2,j,i,t$ and thus blocks edge $\{i,t\}$. As a consequence, flow originating at source $s_1$ needs to take the path $s_1,i,j,t$.}
 \label{fig:qtfao}
\end{figure}

We are able to give tight bounds for the price of orientation with regard to the flow value, and we show that the price of orientation with regard to the time horizon cannot be smaller than linear in the number of nodes. Table~\ref{overviewtable} shows an overview of our results. 
\begin{table}[bht]
 \centering 
  \begin{tabular}{c@{\hspace{5mm}}c@{\hspace{8mm}}c@{\hspace{3mm}}c@{\hspace{8mm}}c@{\hspace{3mm}}c}
  \toprule
  Sources & Sinks   & \multicolumn{2}{c}{Flow Value}                                                   & \multicolumn{2}{c}{Time}  \\ 
  \cmidrule{3-4}\cmidrule{5-6}
          &         & Price & Reference                                                                & Price & Reference         \\
  \midrule
  1       & 1       & 1   & Ford, Fulkerson~\cite{FoFu:62}                                          & 1                         & Ford, Fulkerson~\cite{FoFu:62}\\
  2+      & 1       & 2   & Theorem~\ref{corollary:upperboundsingle} & $\Omega(n)$               & Theorem~\ref{thm:price:time}\\
  1       & 2+      & 2   & Theorem~\ref{corollary:upperboundsingle} & $\Omega(n)$               & Theorem~\ref{thm:price:time}\\
  2+      & 2+      & 3   & Theorem~\ref{theorem:flowpriceupperbound}, \ref{theorem:lowerboundflow}    & $\Omega(n)$               & Theorem~\ref{thm:price:time}\\
  \bottomrule
 \end{tabular}\\[1.5mm] 
 \caption{An overview of price of orientation results.\label{overviewtable}}
\end{table}
Our main result is the tight bound of 3 on the flow price of orientation for the multiple sources and sinks case. We describe an algorithm that is capable of simulating balances through capacities of auxiliary edges. This allows us to transform a problem with supplies and demands to the much simpler case of a single source with unbounded supply and a single sink with unbounded demand. We characterize the properties that the capacities of the auxiliary edges should have for a good approximation, and describe how they can be obtained using an iterative approach that uses Brouwer fixed-points. On the negative side, we give an instance whose price of orientation is not better than~3.

Since we have two ways to pay the price of orientation -- decreasing the flow value or increasing the time horizon -- the question arises whether it might be desirable to pay the price partly as flow value and partly as time horizon. We prove that by doing so, we can achieve a bicriteria-price of $2 / 2$ for the case of multiple sources and sinks, \ie, we can send at least half the flow value in twice the amount of time. 

In Section~\ref{sec:hardness} we analyze the complexity of finding the best orientation to minimize the loss in time or flow value for a specific instance. We are able to show that these problems cannot be approximated with a factor better than 2, unless $P = NP$. Furthermore, we extend this to two multicommodity versions of this problem and show that these become inapproximable, unless $P = NP$.

\vspace{-2mm}
\section{Preliminaries}
\label{sec:prelim}

\paragraph{Networks and Orientations.}

An \emph{undirected network over time} $N$ consists of an undirected graph $G$ with a set of nodes $V(G)$, a set of edges $E(G)$, capacities $u_e \geq 0$ and transit times $\tau_e \geq 0$ on all edges $e \in E(G)$, balances $b_v$ on all nodes $v \in V(G)$, and a time horizon $T \geq 0$. For convenience, we define $V(N) := V(G), E(N) := E(G)$. The capacity $u_e$ is interpreted as the maximal \emph{inflow rate} of edge $e$ and flow entering an edge $e$ with a transit time of $\tau_e$ at time $\theta$ leaves $e$ at time $\theta+\tau_e$. We extend the edge and node attributes to sets of edges and nodes by defining: $u(E) := \sum_{e\in E} u_e$, $\tau(E) := \sum_{e\in E} \tau_e$ and $b(V) := \sum_{v\in V} b_v$. We denote the set of edges incident to a node $v$ by $\delta(v)$.

We define $S^+ := \setc{v \in V(G)}{b_v > 0}$ as the set of nodes with positive balance (also called \emph{supply}), which we will refer to as \emph{sources}. Likewise, we define $S^- := \setc{v \in V(G)}{b_v < 0}$ as the set of nodes with negative balance (called \emph{demand}), which we will refer to as \emph{sinks}. Additionally, we assume that $\sum_{v \in V(G)} b_v = 0$ and define $B := \sum_{v \in S^+} b_v$. To define a \emph{directed} network over time, replace the undirected graph with a directed one. In a directed network, we denote the set of edges leaving a node $v$ by $\delta^+(v)$ and the set of edges entering $v$ by $\delta^-(v)$ for all $v \in V(G)$.

An \emph{orientation} $\orientN$ of an undirected network over time $N$ is a \emph{directed network over time} $\orientN = (\orient{G}, \orient{u}, b, \orient{\tau}, T)$, such that $\orient{G}$, $\orient{u}$ and $\orient{\tau}$ are orientations of $G$, $u$ and $\tau$, respectively. This means that for every edge $\set{v,w} \in E(G)$ there is either $(v,w)$ or $(w,v)$ in $E(\orient{G})$ (but not both) and (assuming $(v,w) \in E(\orient{G})$) $\orient{u}_{(v,w)} = u_{\set{v,w}}$ and $\orient{\tau}_{(v,w)} = \tau_{\set{v,w}}$. Recall that we can use parallel edges if we want to model streets with multiple lanes -- each parallel edge can then be oriented individually.

\vspace{-0.5ex}
\paragraph{Flows over Time.}

A \emph{flow over time} $f$ in a directed network over time $N = (G, u, b, \tau, T)$ assigns a Lebesgue-integrable flow rate function $f_e: [0,T) \to \R^+_0$ to every edge $e\in E(G)$. We assume that no flow is left on the edges after the time horizon, \ie, $f_e(\theta) = 0$ for all $\theta \geq T - \tau_e$. The flow rate functions $f_e$ have to obey \emph{capacity constraints}, \ie, $f_e(\theta) \leq u_e$ for all $e\in E, \theta \in [0,T)$. Furthermore, they have to satisfy \emph{flow conservation constraints}. For brevity, we define the \emph{excess} of a node as the difference between the flow reaching the node and leaving it:
  $\excess_f(v,\theta) := \sum_{e\in \delta^-(v)} \int_{0}^{\theta-\tau_e} f_e(\xi)\ d\xi - \sum_{e\in \delta^+(v)} \int_{0}^{\theta} f_e(\xi)\ d\xi.$
Additionally, we define $\excess(v) := \excess(v,T)$. Then we can write the flow conservation constraints as 
\ama
 \excess(v)=0, \excess(v,\theta) &\geq 0 &\text{for all } v\in V(N) \backslash (S^+ \cup S^-), \theta \in [0,T),\\
 0 \geq \excess(v,\theta)&\geq -b_v &\text{for all } v\in S^+, \theta \in [0,T),\\
 0 \leq \excess(v,\theta)&\leq -b_v &\text{for all } v\in S^-, \theta \in [0,T).\\
\ema
The \emph{value} $|f|_\theta$ of a flow over time $f$ until time $\theta$ is the amount of flow that has reached the sinks until time $\theta$: $|f|_\theta := \sum_{s^- \in S^-} \excess_f(s^-,\theta)$ with  $\theta \in [0, T]$. For brevity, we define $|f| := |f|_T$. 

We define flows over time in undirected networks over time $N$ by transforming $N$ into a directed network $N'$, using the following construction. We replace every undirected edge $e = \set{v,w} \in E(N)$ by introducing two additional nodes $vw$, $vw'$ and edges $(v,vw)$, $(w,vw)$, $(vw,vw')$, $(vw',v)$, $(vw',w)$. We set $u_{(vw,vw')} = u_e$ and $\tau_{(vw,vw')} = \tau_e$, the rest of the new edges gets zero transit times and infinite capacities. This transformation replaces all undirected edges with directed edges, giving us the directed network $N'$. Every flow unit that could have used $\set{v,w}$ from either $v$ to $w$ or $w$ to $v$ must now use the new edge $(vw,vw')$, which has the same attributes as $\set{v,w}$. The other four edges just ensure that $(vw,vw')$ can be used by flow from $v$ to $w$ or $w$ to $v$. Thus, whenever we consider flows over time in $N$, we interpret them as flows over time in $N'$ instead.

\vspace{-0.5ex}
\paragraph{Maximum Flows over Time.}
The \emph{maximum flow over time problem} consists of a directed or undirected network over time $N = (G, u, b, \tau, T)$ where the objective is to find a flow over time of maximum value. The sources and sinks have usually unbounded supplies and demands in this setting but it can also be studied with finite supplies and demands. In the latter case, the problem is sometimes referred to as \emph{transshipment over time} problem.

Ford and Fulkerson~\cite{FoFu:62} showed that the case of unbounded supplies and demands can be solved by a reduction to a static minimum cost circulation problem. This yields a \emph{temporally repeated} flow as an optimal solution. Such a flow $x$ is given by a family of paths $\mathcal{P}$ along which flow is sent at constant rates $x_P$, $P \in \mathcal{P}$ during the time intervals $[0,T - \tau_P)$, with $\tau_P := \sum_{e\in P} \tau_e$. The algorithm of Ford and Fulkerson obtains these paths by decomposing the solution to the minimum cost circulation problem. This algorithm has the nice property that edges are only used in one direction, as it is based on a static flow decomposition.

The \emph{maximum contraflow over time problem} is given by an undirected network over time $N = (G, u, b, \tau, T)$ and the objective is to find an orientation $\orient{N}$ of $N$ such that the value of a maximum flow over time in $\orient{N}$ is maximal over all possible orientations of $N$.

\vspace{-0.5ex}
\paragraph{Quickest Flows.}
The \emph{quickest flow problem} or \emph{quickest transshipment problem} is given by a directed or undirected network over time $N = (G, u, b, \tau)$ and the objective is to find the smallest time horizon $T$ such that all supplies and demands can be fulfilled, \ie, a flow over time with value $B$ can be sent. Hoppe and Tardos~\cite{HOPTAR00} gave a polynomial algorithm to solve this problem. However, an optimal solution to this problem might have to use an edge in both directions; see Fig.~\ref{fig:qtfao}. 

The \emph{quickest contraflow problem} is given by an undirected network over time $N = (G, u, b, \tau)$ and the objective is to find an orientation $\orient{N}$ of $N$ such that the time horizon of a quickest flow in $\orient{N}$ is minimal over all possible orientations of $N$.

\section{The Price of Orientation}
\label{sec:existence}

We study two different models for the price of orientation. The \emph{flow price of orientation} for an undirected network over time $N = (G, u, b, \tau, T)$ is the ratio between the value of a maximum flow over time $f_N$ in $N$ and maximum of the values of maximum flows over time $f_{\orient{N}}$ in orientations $\orient{N}$ of $N$:
$$ |f_N| / \max_{\orient{N} \text{ orientation } of N} |f_{\orient{N}}|.$$

Similarly, the \emph{time price of orientation} for an undirected network over time $N = (G, u, b, \tau)$ is the ratio between the minimal time horizon $T(f_{\orient{N}})$ of a quickest flow $f_{\orient{N}}$ in an orientation $\orient{N}$ of $N$ and the time horizon $T(f_N)$ of a quickest flow over time $f_N$ in $N$:
$$ \min_{\orient{N} \text{ orientation } of N} T(f_{\orient{N}}) / T(f_N).$$

\subsection{Price in Terms of Flow Value}

In this subsection, we will examine the flow price of orientation. We will see that orientation can cost us two thirds of the flow value in some instances, but not more.

\begin{theorem}
 \label{theorem:flowpriceupperbound}
 Let $N = (G, u, b, \tau, T)$ be an undirected network over time, in which $B$ units of flow can be sent within the time horizon $T$. Then there exists an orientation $\orient{N}$ of $N$ in which at least $B/3$ units of flow can be sent within time horizon $T$.
\end{theorem}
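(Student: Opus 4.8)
The plan is to reduce the multi-terminal problem to the single-source/single-sink case, where by Ford and Fulkerson a maximum flow over time is temporally repeated and hence already uses every edge in one direction, i.e., already prescribes an orientation. From $N$ build a network $\hat{N}(c,d)$ with one super source $\sigma$ and one super sink $\mu$: add a directed edge $(\sigma,v)$ of transit time $0$ and capacity $c_v$ for each $v\in S^+$, a directed edge $(w,\mu)$ of transit time $0$ and capacity $d_w$ for each $w\in S^-$, give $\sigma$ unbounded supply and $\mu$ unbounded demand, and turn every node of $N$ into an intermediate node (the rest of $\hat{N}(c,d)$ stays undirected). By Ford and Fulkerson a maximum flow over time $g$ in $\hat{N}(c,d)$ with horizon $T$ may be taken temporally repeated along a path decomposition of a static min-cost flow, so it uses each edge of $G$ in a single direction; fixing these directions defines an orientation $\orient{N}$ of $N$. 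Deleting the super edges from $g$ yields a flow over time $g'$ in $\orient{N}$ with horizon $T$ and value $|g'|=|g|$ (the flow reaching the sinks of $N$ equals the flow reaching $\mu$). Moreover $|g|$ is exactly the maximum flow over time value of the \emph{undirected} network $\hat{N}(c,d)$, which is convenient below.

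\textbf{Feasibility for $N$.}
The flow $g'$ pushes a total of $\gamma_v:=\int_0^T g_{(\sigma,v)}(\theta)\,d\theta$ out of each source $v$ and $\delta_w:=\int_0^T g_{(w,\mu)}(\theta)\,d\theta$ into each sink $w$. A routine check of the excess constraints — flow conservation of $g$ at $v$ forces $-\excess(v,\theta)$ to equal the amount $\sigma$ has injected into $v$ up to time $\theta$, which is monotone in $\theta$ and at most $\gamma_v$ — shows that $g'$ satisfies all balance constraints of $N$ as soon as $\gamma_v\le b_v$ for every $v\in S^+$ and $\delta_w\le -b_w$ for every $w\in S^-$; intermediate nodes are unaffected. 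Hence it suffices to choose $(c,d)$ so that these throughput bounds hold and $|g|\ge B/3$.

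\textbf{Choosing the capacities (Brouwer).}
This is the technical heart. With $c_v=d_w=\infty$ the value $|g|$ is at least the value $B$ attainable in $N$ (lift a maximum flow over time of $N$ by letting $\sigma$ feed each source at the rate it emits), but the throughput bounds may fail, so one must throttle over-used super edges; since lowering one $c_v$ reroutes flow and can raise the throughput of others, the capacities cannot be adjusted independently. Instead one isolates the properties a good capacity vector $(c^{*},d^{*})$ must have — informally, that under an optimal $g$ each super edge is either saturated with throughput equal to its balance or left unrestricted — and realizes such a vector as a fixed point of a continuous self-map of a compact set of capacity vectors; continuity of this map is the delicate point, since the maximum-flow-over-time value depends piecewise-linearly on the capacities but one must track throughput-type quantities that are not automatically continuous. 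Brouwer's fixed-point theorem then supplies $(c^{*},d^{*})$, and by construction the throughput bounds of the previous step hold, so $\orient{N}$ admits the feasible flow $g'$ of value $|g|$.

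\textbf{The value bound, and the main obstacle.}
It remains to show $|g|\ge B/3$ at the fixed point, for which one compares $\hat{N}(c^{*},d^{*})$ with $N$. From a maximum flow over time of $N$ — using that the fixed-point properties force $c^{*}_v\ge b_v/T$ and $d^{*}_w\ge -b_w/T$, so that each super edge can carry a constant-rate share of it — one constructs a flow over time in the undirected $\hat{N}(c^{*},d^{*})$ of value at least $B/3$. The factor $3$ is the combined price of routing all of a source's outflow through one rate-limited edge and all of a sink's inflow through another, together with the flow that is cancelled, resp.\ lost to transit times, when the result is realized as a temporally repeated flow; the fixed-point condition on $(c^{*},d^{*})$ is exactly what keeps these losses summing to at most $2B/3$. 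The reduction and the feasibility check are routine; the difficulty lies in building a genuinely continuous fixed-point map whose fixed points encode the desired capacities (in particular the treatment of the unrestricted super edges) and in the accounting that turns the two sources of loss into the constant $3$ — this last step, which is where the careful choice of $(c^{*},d^{*})$ is actually used, is the main obstacle.
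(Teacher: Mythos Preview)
Your overall architecture matches the paper exactly: super source and super sink, auxiliary edges with capacities chosen via a Brouwer fixed point so that a temporally repeated maximum flow over time respects the original balances, and that flow then prescribes the orientation. The feasibility check is fine, and your observation $c^{*}_v\ge b_v/T$ (from $b_v=|f_v|\le c^{*}_vT$ at capacitated terminals) is correct.

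The genuine gap is the value bound, which you yourself flag as the main obstacle. Your plan---lift a maximum flow of $N$ through the super edges at a ``constant-rate share'' and absorb the losses into a factor~$3$---does not work. Forcing the outflow of a source $v$ through an edge of capacity $b_v/T$ requires temporal averaging, and averaged flow no longer reaches the sinks by time~$T$; this is precisely the mechanism behind the bicriteria result that needs horizon~$2T$. The lower bound $c^{*}_v\ge b_v/T$ alone does not let you build a value-$B/3$ flow in $\hat N(c^{*},d^{*})$ with horizon~$T$, and the talk of ``flow cancelled'' or ``lost to transit times when realized as temporally repeated'' has no quantitative content here.

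The paper's argument for $|g|\ge B/3$ is entirely different and purely combinatorial. The fixed point yields a dichotomy at each terminal: either the auxiliary edge is uncapacitated, or the balance is met exactly. Partition $S^+=S_1^+\cup S_2^+$ and $S^-=S_1^-\cup S_2^-$ into the capacitated and uncapacitated terminals. Since the $S_1$-balances are tight, $|g|\ge\max\{b(S_1^+),-b(S_1^-)\}$, so we may assume both are below $B/3$, whence $b(S_2^+),-b(S_2^-)>2B/3$. Now look at the value-$B$ flow in~$N$: of what $S_2^+$ emits, at most $-b(S_1^-)<B/3$ can land in $S_1^-$, so at least $B/3$ travels from $S_2^+$ to $S_2^-$. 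That portion lifts to $\hat N(c^{*},d^{*})$ unchanged because the corresponding super edges are uncapacitated, giving $|g|\ge B/3$. The missing idea, then, is to exploit the capacitated/uncapacitated dichotomy via a case split on the total tight balance, rather than trying to squeeze all of $N$'s flow through rate-limited edges.
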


\begin{proof}
 The idea of this proof is to simplify the instance, such that a temporally repeated solution can be found. Such a solution gives us an orientation that we can use, if the simplification does not cost us too much in terms of flow value. We will achieve this by simulating the balances using additional edges and capacities, creating a maximum flow over time problem which permits a temporally repeated solution. Then we show that the resulting maximum flow over time problem is close enough to the original problem for our claim to follow. 
 
\vspace{-1ex}
\paragraph{Simulating the balances.} We achieve this by adding a super source $s$ and a super sink $t$ to the network, resulting in an undirected network over time $N' = (G', u', \tau', s, t, T)$ with $V(G') := V(G) \cup \set{s,t}$, $E(G') := E(G) \cup \{\set{s,s^+}\}|\{s^+ \in S^+\} \cup \setc{\set{s^-,t}}{s^- \in S^-}$, $u'_e := u_e$ for $e \in E(G)$ and $\infty$ otherwise, $\tau'_e := \tau_e$ for $e \in E(G)$ and $0$ otherwise.
 We refer to the newly introduced edges of $E(G') \setminus E(G)$ as \emph{auxiliary} edges. Furthermore, we sometimes refer to an auxiliary edge by the unique terminal node it is adjacent to and write $u_v$ for $u_e, e = (s,v)$, $f_v$ for $f_e, e = (s,v)$ and so on. An illustration of this construction can be found in Fig.~\ref{fig:superterminals}.
\begin{figure}[tb]
  \centering
  \begin{tikzpicture}
   \node[fill,circle,minimum size=6mm] (s) at (0,0) {\textcolor{white}{$s$}};
   \node[draw,circle,minimum size=6mm] (a) at (2,-1) {}
    edge[thick,-,dashed] (s);
   \node[draw,circle,minimum size=6mm] (b) at (2,-0) {}
    edge[thick,-,dashed] (s);
   \node[draw,circle,minimum size=6mm] (c) at (2,1) {}
    edge[thick,-] (b) 
    edge[thick,-,dashed] (s);
   \node[draw,circle,minimum size=6mm] (d) at (3.5,0.8) {}
    edge[thick,-] (b)
    edge[thick,-] (c);
   \node[draw,circle,minimum size=6mm] (e) at (4.0,-1.1) {}
    edge[thick,-] (a)
    edge[thick,-] (b)
    edge[thick,-] (d);
   \node[draw,circle,minimum size=6mm] (f) at (4.7,1.2) {}
    edge[thick,-] (d);
   \node[draw,circle,minimum size=6mm] (g) at (5.3,0.1) {}
    edge[thick,-] (d)
    edge[thick,-] (e)
    edge[thick,-] (f);
   \node[draw,circle,minimum size=6mm] (h) at (7.0,0.5) {}
    edge[thick,-] (f)
    edge[thick,-] (g);
   \node[draw,circle,minimum size=6mm] (i) at (7.0,-0.5) {}
    edge[thick,-] (e)
    edge[thick,-] (g);
   \node[fill,circle,minimum size=6mm] (t1) at (9,0) {\textcolor{white}{$t$}}
    edge[thick,-,dashed] (h)
    edge[thick,-,dashed] (i);
  \end{tikzpicture}
  \caption{The modified network consisting of the original network (white), the superterminals (black) and the dashed auxiliary edges. \label{fig:superterminals}}
\end{figure}
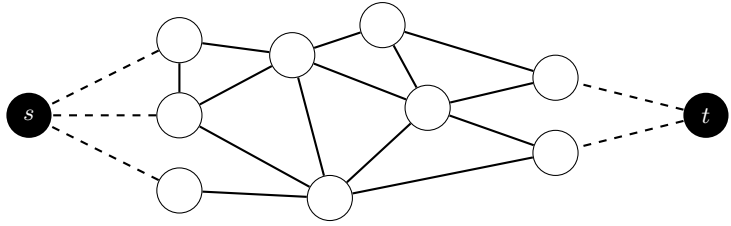

 The network $N'$ describes a maximum flow over time problem which has an optimal solution that is a temporally repeated flow, which uses each edge only in one direction during the whole time interval $[0,T)$. Thus, there is an orientation $\orient{N'}$ such that the value of a maximum flow over time in $N'$ is the same as in $\orient{N'}$. However, an optimal solution for $N'$ will generally be infeasible for $N$, since there are no balances in $N'$.

 Thus, we need to modify $N'$ such that balances of $N$ are respected -- but without using actual balances. This leaves us the option to modify the capacities of the auxiliary edges. In the next step, we will show that we can always find capacities that enforce that the balances constraints are satisfied and have nice properties for bounding the loss in flow value incurred by the capacity modification. These properties are then used in the last step to complete the proof.

\vspace{-1ex}
\paragraph{Enforcing balances by capacities for auxiliary edges.} In this step, we show that we can choose capacities for the auxiliary edges in such a way that there is a maximum flow over time in the resulting network that respects the original balances. Choosing finite capacities for some of the auxiliary edges will -- in general -- reduce the maximum flow value that can be sent, though. In order to bound this loss of flow later on, we need capacities with nice properties, that can always be found. 

 \begin{lemma}
  \label{lemma:capacities}
  There are capacities $u''_e$ that differ from $u'_e$ only for the auxiliary edges, such that the network $N'' = (G', u'', \tau', s, t, T)$ has a temporally repeated maximum flow over time $f$ with the following properties
  \begin{itemize}
   \item the balances of the nodes in the original setting are respected:\\
     $ |f_v| := \int_0^T f_v(\theta)\ d\theta  \leq |b_v| \quad \all v \in S^+ \cup S^-, $
   \item and that terminals without tightly fulfilled balances have auxiliary edges with unbounded capacity:
     $ |f_v| < |b_v| \impl u_v = \infty \quad \all v \in S^+ \cup S^-. $
  \end{itemize}
 \end{lemma}
  
 \begin{proof}
  The idea of this proof is to start with unbounded capacities and iteratively modify the capacities based on the balance and amount of flow currently going through an node, until we have capacities satisfying our needs. In order to show that such capacities exist, we apply Brouwer's fixed-point theorem on the modification function to show the existence of a fix point. By construction of the modification function, this implies the existence of the capacities.  
  
\vspace{-1ex}
\paragraph{Prerequisites for using Brouwer's fixed point-theorem.}
  We begin by defining $U := \sum_{v\in S^+} \sum_{a = (v,\cdot) \in E(G)} u_a$ as an upper bound for the capacity of auxiliary edges and we will treat $U$ and $\infty$ interchangeably from now on. This allows us to consider capacities in the interval $[0,U]$, which is convex and compact, instead of $[0,\infty)$. This will be necessary for applying Brouwer's fixed point theorem later on.

  Now assume that we have some capacities $u \in [0,U]^{S^+ \cup S^-}$ for the auxiliary edges. Since we leave the capacities for all other edges unchanged, we identify the capacities for the auxiliary edges with the capacities for all edges. Compute a maximum flow over time $f(u)$ for $(G', u, \tau', s, t, T)$ by using Ford and Fulkersons' reduction to a static minimum cost flow. For this proof, we need to ensure that small changes in $u$ result in small changes in $f(u)$, \ie, we need continuity. Thus, we will now specify that we compute the minimum cost flow by using successive computations of shortest $s$-$t$-paths. In case there are multiple shortest paths in an iteration, we consider the shortest path graph, and choose a path in this graph by using a depth-first-search that uses the order of edges in the adjacency list of the graph as a tie-breaker. The path decomposition of the minimum cost flow deletes paths in the same way. This guarantees us that we choose paths consistently, leading to the continuity that we need.

\vspace{-1ex}
\paragraph{Defining the modification function.}
  In order to obtain capacities for a maximum flow over time that respects the balances, we define a function $h: [0,U]^{S^+ \cup S^-} \to [0,U]^{S^+ \cup S^-}$ which will reduce the capacities of the auxiliary edges, if balances are not respected:
  \ama
   (h(u))_v := \min \set{U, \frac{b_v}{|f_v(u)|} u_v} \quad \all v \in S^+ \cup S^-.
  \ema
  $|f_v(u)|$ refers to the amount of flow going through the auxiliary edge of terminal $v \in S^+ \cup S^-$ in this definition. If $|f_v(u)| = 0$, we assume that the minimum is $U$. Due to our rigid specification in the maximum flow computation, $|f_v(u)|$ is continuous, and therefore $h$ is continuous as well.

\vspace{-1ex}
\paragraph{Using Brouwer's fixed-point theorem.}
  Thus, $h$ is continuous over a convex, compact subset of $\R^{S^+ \cup S^-}$. By Brouwer's fixed-point theorem it has a fixed point $\overline{u}$ with $h(\overline{u}) = \overline{u}$, meaning that for every $v \in S^+ \cup S^-$ either $u_v = U$ or $u_v = \frac{b_v}{|f_v(u)|} u_v \eq b_v = |f_v(u)|$ holds, which is exactly what we require of our capacities. \qed
 \end{proof}

 We can now choose capacities $u''$ in accordance to Lemma~\ref{lemma:capacities}, and thereby gain a maximum flow over time problem instance $N'' = (G', u'', \tau', s, t, T)$, that has a temporally repeated optimal solution which does not violate the original balances. What is left to do is to analyze by how much the values of optimal solutions for $N$ and $N''$ are apart.

\vspace{-1ex}
\paragraph{Bounding the difference in flow value between $N$ and $N''$.} We now want to show that we can send at least $B/3$ flow units in the network $N''$ with the auxiliary capacities of the previous step. For the purpose of this analysis, we partition the sources and sinks as follows. 
 \ama
  S^+_1 &:= \setc{s^+ \in S^+}{u_{s^+} < \infty}, S^+_2 := \setc{s^+ \in S^+}{u_{s^+} = \infty},\\
  S^-_1 &:= \setc{s^- \in S^-}{u_{s^-} < \infty}, S^-_2 := \setc{s^- \in S^-}{u_{s^-} = \infty}.\\
 \ema
The partitioning is also shown in Fig.~\ref{fig:partitioning}.

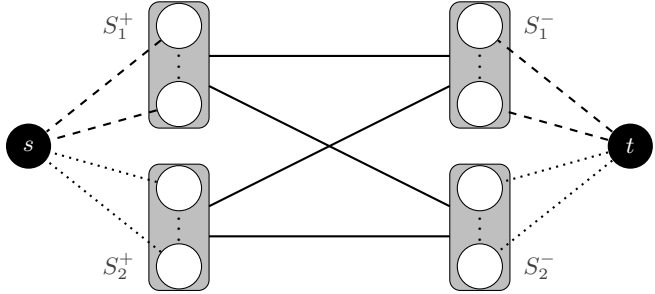
\begin{figure}[htb]
  \centering
  \begin{tikzpicture}[yscale=0.8]
   \node[fill,circle,minimum size=6mm] (s) at (0,0) {\textcolor{white}{$s$}};
   \draw[draw=black,rounded corners,fill=lightgray,name=M] (1.6,-2.4) rectangle (2.4,-0.3);
   \draw[draw=black,rounded corners,fill=lightgray,name=N] (1.6, 0.3) rectangle (2.4, 2.4);   
   \node[draw,fill=white,circle,minimum size=6mm] (a) at (2,-2) {}
    edge[thick,-,dotted] (s);
   \node[] (x) at (2,-1.25) {$\vdots$};
   \node[draw,fill=white,circle,minimum size=6mm] (b) at (2,-0.7) {}
    edge[thick,-,dotted] (s);
   \node[draw,fill=white,circle,minimum size=6mm] (c) at (2,0.7) {}
    edge[thick,-,dashed] (s);
   \node[] (x) at (2,1.45) {$\vdots$};
   \node[draw,fill=white,circle,minimum size=6mm] (d) at (2,2) {}
    edge[thick,-,dashed] (s);
   \draw[draw=black,rounded corners,fill=lightgray,name=O] (5.6,-2.4) rectangle (6.4,-0.3);
   \draw[draw=black,rounded corners,fill=lightgray,name=P] (5.6, 0.3) rectangle (6.4, 2.4);
   \draw[thick,-] (2.4, -1.5) -- (5.6, -1.5);  
   \draw[thick,-] (2.4,-1) -- (5.6, 1);
   \draw[thick,-] (2.4, 1) -- (5.6,-1);
   \draw[thick,-] (2.4, 1.5) -- (5.6, 1.5);
   \node[draw,fill=white,circle,minimum size=6mm] (e) at (6,-2) {};
   \node[] (x) at (6,-1.25) {$\vdots$};
   \node[draw,fill=white,circle,minimum size=6mm] (f) at (6,-0.7) {};
   \node[draw,fill=white,circle,minimum size=6mm] (g) at (6,0.7) {};
   \node[] (x) at (6,1.45) {$\vdots$};
   \node[draw,fill=white,circle,minimum size=6mm] (h) at (6,2) {};
   \node[fill,circle,minimum size=6mm] (t1) at (8,0) {\textcolor{white}{$t$}}
    edge[thick,-,dotted] (e)
    edge[thick,-,dotted] (f)
    edge[thick,-,dashed] (g)
    edge[thick,-,dashed] (h) ;
   \node[darkgray] (x) at (1.2,2.0) {$S^+_1$};
   \node[darkgray] (x) at (1.2,-2.0) {$S^+_2$};
   \node[darkgray] (x) at (6.8,2.0) {$S^-_1$};
   \node[darkgray] (x) at (6.8,-2.0) {$S^-_2$};
  \end{tikzpicture}
  \caption{The partitioning based on the capacities of the auxiliary edges. Dashed edges have finite capacity, dotted edges have infinite capacities. \label{fig:partitioning}}
\end{figure}

 Now let $f$ be a temporally repeated maximum flow in $N''$ that does not violate balances. Notice that the auxiliary edges to terminals in $S^+_2$ and $S^-_2$, respectively, have infinite capacity and that the supply / demand of nodes in $S^+_1$ and $S^-_1$ is fully utilized. Thus, $|f| \geq \max\set{b(S^+_1),b(S^-_1)}$. Should $b(S^+_1) \geq B/3$ or $b(S^-_1) \geq B/3$ hold, we would be done -- so let us assume that $b(S^+_1) < B/3$ and $b(S^-_1) < B/3$. It follows that $b(S^+_2) \geq 2/3 B$ and $b(S^-_2) \geq 2/3 B$ must hold in this case. 
 Now consider the network $N'$ with the terminals of $S^+_1$ and $S^-_1$ removed, leaving only the terminals of $S^+_2$ and $S^-_2$. We call this network $N'(S^+_2,S^-_2)$. Let $|f'|$ be the value of a maximum flow over time in $N'(S^+_2,S^-_2)$. Since $B$ units of flow can be sent in $N$ (and therefore $N'$ as well), we must be able to send at least $B/3$ units in $N'(S^+_2,S^-_2)$. This is due to the fact that $b(S^+_2) \geq 2/3 B, b(S^-_2) \geq 2/3 B$ -- even if $B/3$ of these supplies and demands were going to $S^-_1$ and coming from $S^+_1$, respectively, this leaves at least $B/3$ units that must be send from $S^+_2$ to $S^-_2$. Thus, $B/3 \leq |f'|$. Since the capacities of the auxiliary edges of $S^+_2$ and $S^-_2$ are infinite, we can send these $B/3$ flow units in $N''$ as well, proving this part of the claim.

 Thus, we have shown that a transshipment over time problem can be transformed into a maximum flow over time problem with auxiliary edges and capacities. If these edges and capacities fulfill the requirements of Lemma~\ref{lemma:capacities}, we can transfer solutions for the maximum flow problem to the transshipment problem such that at least one third of the total supplies of the transshipment problem can be send in the flow problem. Finally, the proof of Lemma~\ref{lemma:capacities} shows that such capacities do always exist, completing the proof.\qed
\end{proof}

Notice that the algorithm described in the proof is not necessarily efficient -- it relies on Brouwer's fixed-point theorem, and finding an (approximate) Brouwer fixed-point is known to be PPAD-complete \cite{Papadimitriou1994} and exponential lower bounds for the common classes of algorithms for this problem are known \cite{Hirsch1989}. Since the algorithm is efficient aside from finding a Brouwer fixed-point, our problem is at least not harder than finding a Brouwer fixed-point. Thus, our problem is probably not FNP-complete (with FNP being the functional analog of NP) as PPAD-completeness indicates that a problem is not FNP-complete \cite{Papadimitriou1994}. However, it is possible that the fixed-point can efficiently be found for the specific function we are interested in. One problem for finding such an algorithm is however, that changing the capacity of one auxiliary edge does not only modify the amount flow through its associated terminal but through other terminals as well -- and this change in flow value can be an increase or decrease, making monotonicity arguments problematic. 

Another potential approach could be to find a modification function for which (approximate) Brouwer fixed-points can be found efficiently. Using approximate Brouwer fixed-points would result in a weaker version of Lemma 1, where an additional error is introduced due to the approximation. This error can be made arbitrarily small by approximating the Brouwer fixed-point more closely, or by using alternative modification functions. However, finding a modification function for which an approximation of sufficient quality can be found efficiently remains an open question.

Now that we have an upper bound for the flow price of orientation and it turns out that this bound is tight. 

\begin{theorem}
 \label{theorem:lowerboundflow}
 For any $\eps > 0$, there are undirected networks over time $N = (G, u, b, \tau, T)$ in which $B$ units of flow can be sent, but at most $B/3+\eps$ units of flow can be sent in any orientation $\orient{N}$ of $N$.
\end{theorem}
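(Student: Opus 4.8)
The plan is to construct, for every sufficiently large time horizon $T$, a fixed-size undirected network over time $N_T = (G,u,b,\tau,T)$ with total supply $B$ in which a flow over time of value $B$ exists, but every orientation of $N_T$ admits a flow over time of value at most $B/3 + O(B/T)$. Taking $T$ large enough that $O(B/T) < \eps$ then proves the theorem (this is the same mechanism by which Figure~\ref{fig:qtfao} achieves a factor of $2$ only ``in the limit'').

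The factor $3$ should be forced by a small combinatorial skeleton. A single undirected edge can be oriented in only two ways, so an instance whose orientation is determined by one ``critical'' edge loses at most a factor of $2$ --- this is exactly what Figure~\ref{fig:qtfao} and the single-source/single-sink row of Table~\ref{overviewtable} reflect. I would therefore use \emph{two} critical edges $e_1, e_2$ and split the supply into three groups of $B/3$ units, each routed between its own source and sink through a Ford--Fulkerson-type bottleneck gadget in the style of Figure~\ref{fig:qtfao}. The gadgets are designed so that group~$1$ can move more than $O(B/T)$ units only if $e_1$ is oriented in a direction $\alpha$ and $e_2$ in a direction $\beta$; group~$2$ only if $e_1$ is oriented $\alpha$ and $e_2$ is oriented $\overline\beta$; and group~$3$ only if $e_1$ is oriented $\overline\alpha$ (group~$3$ being indifferent to $e_2$). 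As in Figure~\ref{fig:qtfao}, ``only if'' is enforced by throttling every alternative route of a group through a bottleneck of capacity $O(B/T)$ (or one whose capacity is consumed by another group's flow, like the edge $\{i,t\}$ in Figure~\ref{fig:qtfao}), so that a group whose orientation requirement is violated is reduced to $O(B/T)$ units.

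Checking the four orientations of $(e_1,e_2)$ shows that each of them meets the requirement of exactly one group; hence in every orientation of $N_T$ at least two of the three groups are blocked down to $O(B/T)$ units while the third contributes at most its supply $B/3$, giving value at most $B/3 + O(B/T)$. (One also observes that orienting every edge other than $e_1,e_2$ in its intended direction is without loss of generality, so that this four-case analysis covers every orientation.) For the converse I would simply route, in the undirected $N_T$, each group along its preferred path: the edges $e_1$ and $e_2$ are then traversed by different groups in opposite directions, which is harmless because in the undirected network their shared capacities only bound the total \emph{simultaneous} flow rate, and this rate can be kept below the capacity just as $\{i,j\}$ in Figure~\ref{fig:qtfao} carries flow both ways. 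This yields a flow over time of value $B$ in $N_T$.

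The hard part will be designing the three gadgets so that a ``forbidden'' orientation genuinely chokes a group to $O(B/T)$ units \emph{through the whole network}: a blocked group must not be able to reroute its supply through the gadgets or terminals of the other two groups. I would keep the non-critical part of $N_T$ essentially tree-like around $e_1$ and $e_2$ (sharing throttling bottlenecks between groups where needed, as $\{i,t\}$ is shared in Figure~\ref{fig:qtfao}), so that the only non-negligible way a group's supply can leave its source is through its intended bottleneck in the intended direction. Fixing the exact capacities, transit times, and sending time windows of the gadget edges is then routine, and is where the $O(B/T)$ error term enters.
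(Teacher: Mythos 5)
Your plan is essentially the paper's own proof: the instance in Fig.~\ref{fig:lowerboundexample} is precisely a realization of your blueprint, with $\set{v_3,v_4}$ and $\set{v_1,v_2}$ as the two critical edges, three unit-supply source--sink pairs as the three groups, and the throttling achieved by capacity-$1/T$ edges and transit times $(1-\delta)T$, $T$, $\delta T$ against a time horizon of $T+\eps$. The gadget design you defer as ``routine'' is where all the actual work of the paper's argument lies, but your four-case analysis over the orientations of the two critical edges, the shared-bottleneck blocking, and the both-directions routing in the undirected network match the paper's verification step for step.
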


\begin{proof}
In order to show this, we consider the network in Fig.~\ref{fig:lowerboundexample} with three sources and sinks where each source has to send flow to a specific sink (due to capacities and transit times) but the network topology prevents flow from more than one source-sink pair being able to be send in any orientation. 
 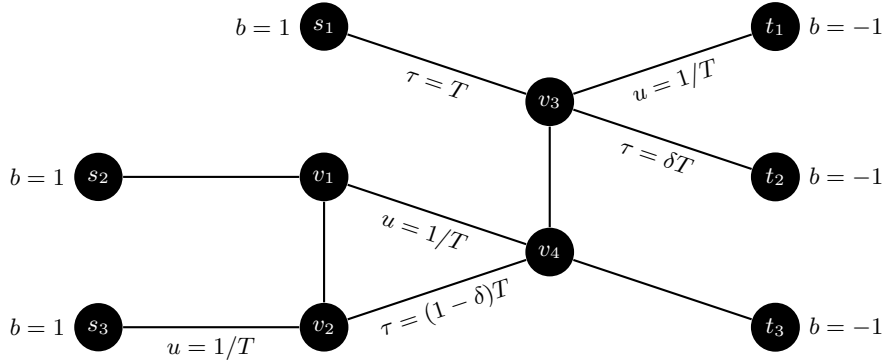
\begin{figure}[tb]
  \centering
  \begin{tikzpicture}
   \node[fill,circle,minimum size=4mm,label={west:$b = 1$}] (a1) at (0,0) {\textcolor{white}{$s_3$}};
   \node[fill,circle,minimum size=4mm,label={west:$b = 1$}] (a2) at (0,2) {\textcolor{white}{$s_2$}};
   \node[fill,circle,minimum size=4mm] (b1) at (3,0) {\textcolor{white}{$v_2$}}
    edge[thick,-] node[auto] {$u = 1/T$} (a1);
   \node[fill,circle,minimum size=4mm] (b2) at (3,2) {\textcolor{white}{$v_1$}}
    edge[thick,-] (a2)
    edge[thick,-] (b1);
   \node[fill,circle,minimum size=4mm,label={west:$b = 1$}] (b3) at (3,4) {\textcolor{white}{$s_1$}};
   \node[fill,circle,minimum size=4mm] (c1) at (6,1) {\textcolor{white}{$v_4$}}
    edge[thick,-] node[auto,sloped,pos=0.9] {$\tau = (1-\delta)T$} (b1)
    edge[thick,-] node[auto,sloped,pos=0.25] {$u = 1/T$} (b2);
   \node[fill,circle,minimum size=4mm] (c2) at (6,3) {\textcolor{white}{$v_3$}}
    edge[thick,-] node[auto,sloped,pos=0.25] {$\tau = T$} (b3)
    edge[thick,-] (c1);
   \node[fill,circle,minimum size=4mm,label={east:$b = -1$}] (d1) at (9,0) {\textcolor{white}{$t_3$}}
    edge[thick,-] (c1);
   \node[fill,circle,minimum size=4mm,label={east:$b = -1$}] (d2) at (9,2) {\textcolor{white}{$t_2$}}
    edge[thick,-] node[auto,sloped,pos=0.25] {$\tau = \delta T$}  (c2);
   \node[fill,circle,minimum size=4mm,label={east:$b = -1$}] (d3) at (9,4) {\textcolor{white}{$t_1$}}
    edge[thick,-] node[auto,sloped,pos=0.75] {$u = 1/T$} (c2);
  \end{tikzpicture}
  \caption{An undirected network where every orientation can send at most one third of the flow possible in the undirected setting. Not specified transit times and balances are 0 and not specified capacities are infinite.
 \label{fig:lowerboundexample}}
 \end{figure}

Consider the undirected network over time $N = (G, u, b, \tau, T + \eps)$ depicted in Fig.~\ref{fig:lowerboundexample}, for some $\eps > 0, \delta \in (0,1)$. 
For $\eps \leq \delta T$, we cannot send flow from $s_1$ to $t_2$ within the time horizon, and we can only send $\eps / T$ flow from $s_1$ to $t_1$. Thus, we have to orient $\set{v_3,v_4}$ as $(v_3,v_4)$ or lose the supply of $s_1$ in the case of $\eps \to 0$. Orienting $\set{v_3,v_4}$ as $(v_3,v_4)$ causes us to lose the demands of $t_1$ and $t_2$, though, resulting in only one third of the flow being able to be sent.

 Therefore let us now orient $\set{v_3,v_4}$ as $(v_4,v_3)$. Supply from $s_3$ needs to go through $\set{s_3,v_2}$ at a rate of at most $1/T$. Thus, if we were to route flow through $\set{s_3,v_2}$ and $\set{v_2,v_4}$ we can send at most $(T+\eps - (1-\delta)T) / T = \eps/T+\delta$ to $v_4$ (and the sinks) within the time horizon. For $\delta, \eps \to 0$ this converges to 0 as well. Since we already lost the supply of $s_1$, we need the supply of $s_3$ if we want to send significantly more than one unit of flow. Therefore, we would have to orient $\set{v_1,v_2}$ as $(v_1,v_2)$ to accomplish this. However, due to the capacity of $1/T$ on $\set{v_1,v_4}$ we can send at most $1+\delta/T$ flow through this edge, and one unit of this flow comes from $s_3$, leaving only $\delta/T$ units for flow from $s_2$. Thus, for $\delta,\eps \to 0$ the flow we can send converges to one. 

 In the undirected network, we can send all supplies. The supply from $s_1$ is sent to $t_3$, using $\set{v_3,v_4}$ at time $T$. The supply from $s_2$ is sent to $t_2$, via $\set{v_1,v_2}$ at time 0, $\set{v_2,v_4}$ and $\set{v_4,v_3}$ at time $(1-\delta)T$. The supply from $s_3$ is sent to $t_1$ by $\set{v_2,v_1}$, $\set{v_1,v_4}$ and $\set{v_4,v_3}$ during the time interval $(0,T)$. This completes the proof. \qed
\end{proof}

With these theorems, we have a tight bound for the flow price of orientation in networks with arbitrarily many sources and sinks. In the case of a single source and sink, we have a maximum flow over time problem and we can always find an orientation in which we can send as much flow as in the undirected network. This leaves the question about networks with either a single source or a single sink open. However, if we use the knowledge that only one source (or sink) exists in the analysis done in the proofs of Theorem~\ref{theorem:flowpriceupperbound} and Theorem~\ref{theorem:lowerboundflow}, we achieve a tight factor of 2 in these cases.

\begin{theorem}
 \label{corollary:upperboundsingle}
 Let $N = (G, u, b, \tau, T)$ be an undirected network over time with a single source or sink, in which $B$ units of flow can be sent within the time horizon $T$. Then there exists an orientation $\orient{N}$ of $N$ in which at least $B/2$ units of flow can be sent within time horizon $T$, and there are undirected networks over time for which this bound is tight.
\end{theorem}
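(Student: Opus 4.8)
The plan is to derive both halves of Theorem~\ref{corollary:upperboundsingle} by specializing the two preceding results. First I would reduce to a single \emph{source}: reversing every edge reverses flows over time, preserves all capacities, transit times and the time horizon, and interchanges the roles of sources and sinks, so the single-sink statement follows from the single-source one. So assume $S^+ = \set{\sigma}$ with $b_\sigma = B$. For the upper bound I would re-run the construction in the proof of Theorem~\ref{theorem:flowpriceupperbound}: add a super source $s$ and super sink $t$ with zero-transit-time auxiliary edges $(s,\sigma)$ and $(s^-,t)$, $s^- \in S^-$. The key simplification is that there is now only one source-side auxiliary edge, so rather than running Brouwer over it we simply fix $u''_{(s,\sigma)} := B$, which forces every feasible flow to respect $\sigma$'s balance for free. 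The fixed-point argument of Lemma~\ref{lemma:capacities} is then applied over the sinks only, yielding capacities $u''$ for which $N'' = (G', u'', \tau', s, t, T)$ admits a temporally repeated maximum flow over time $f$ with $|f_v| \le |b_v|$ for all $v \in S^-$ and $u''_{s^-} = \infty$ whenever the inequality is strict. Thus $f$ respects all original balances, and, being temporally repeated, it induces an orientation $\orient N$.

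To lower-bound $|f|$ I would partition $S^- = S^-_1 \cup S^-_2$ by whether the auxiliary capacity is finite or infinite, as in the proof of Theorem~\ref{theorem:flowpriceupperbound}. Each sink in $S^-_1$ has its demand met exactly, so $|f| \ge b(S^-_1)$; if $b(S^-_1) \ge B/2$ we are done. Otherwise $b(S^-_1) < B/2$, hence $b(S^-_2) > B/2$, and here the single-source hypothesis does the work: in the undirected optimum all $B$ units leave $\sigma$, and at most $b(S^-_1) < B/2$ of them can be absorbed by $S^-_1$, so a flow over time of value at least $B - b(S^-_1) > B/2$ is routed from $\sigma$ to $S^-_2$ in $N$. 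Since $u''_{(s,\sigma)} = B$ and the auxiliary edges into $S^-_2$ have infinite capacity, this flow is feasible in $N''$, so the maximum flow value in $N''$, and hence $|f|$, exceeds $B/2$. Either way $\orient N$ carries at least $B/2$ units. The point is that, unlike in the three-sided bound, there is no second loss coming from an uncontrolled source side — which is exactly what turns $\tfrac13$ into $\tfrac12$.

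For tightness I would adapt the instance of Theorem~\ref{theorem:lowerboundflow} to a single source $\sigma$ with $b_\sigma = 2$ and two sinks $t_1,t_2$ with $b_{t_1}=b_{t_2}=-1$ (again, the single-sink instance is obtained by reversing all edges). The construction retains a bottleneck edge $\set{x,y}$ such that, within the time horizon, the only way to deliver flow to $t_1$ uses it as $(x,y)$ and the only way to deliver flow to $t_2$ uses it as $(y,x)$; as in Fig.~\ref{fig:lowerboundexample}, small-capacity edges together with large transit times on the detours rule out any route bypassing the bottleneck, prevent serving both sinks simultaneously in one orientation, and a slack of $\eps$ in the time horizon lets only a vanishing amount of flow leak through. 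Then the undirected network sends all $B=2$ units by time-sharing $\set{x,y}$ in both directions, while any orientation of $\set{x,y}$ loses one of the sinks and so carries at most $1+\eps = B/2+\eps$. I expect the main obstacle to be precisely the design of this gadget — choosing capacities and transit times so that the bottleneck is genuinely essential in both directions and no parasitic path survives; the upper bound, by contrast, is a routine specialization of Theorem~\ref{theorem:flowpriceupperbound} once one notices that the lone source edge may simply be capped at $B$.
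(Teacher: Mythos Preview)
Your approach mirrors the paper's: both reuse the super-terminal construction and Lemma~\ref{lemma:capacities} from Theorem~\ref{theorem:flowpriceupperbound}, then exploit the single-terminal side to collapse one half of the partition, turning the $\tfrac13$ bound into $\tfrac12$. The paper treats the single-\emph{sink} case and partitions only the sources; you treat the dual single-\emph{source} case and partition only the sinks, which is the same argument up to the symmetry you invoke at the start. The tightness constructions are likewise the same in spirit: the paper restricts the network of Theorem~\ref{theorem:lowerboundflow} to $s_2,s_3,v_1,v_2,v_4$ (with $v_4$ made the unique sink of demand $-2$) for the single-sink case, and gives a small explicit four-node gadget for the single-source case.

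There is one technical slip worth flagging. Fixing $u''_{(s,\sigma)} := B$ does \emph{not} force the balance at $\sigma$: capacities are flow \emph{rates}, so a zero-transit-time edge of capacity $B$ admits up to $BT$ units over $[0,T)$, not $B$. More importantly, this finite cap can obstruct the lower-bound step, since the undirected optimum in $N$ may push flow out of $\sigma$ at a rate exceeding $B$, and then the lift to $N''$ need not be feasible on $(s,\sigma)$. The clean fix is to leave $u''_{(s,\sigma)} = \infty$: the sink-side fixed point already gives $|f_{s^-}| \le |b_{s^-}|$ for all $s^- \in S^-$, and summing yields $|f_\sigma| = \sum_{s^-} |f_{s^-}| \le \sum_{s^-} |b_{s^-}| = B$, so the source balance is respected automatically and the lift goes through unobstructed. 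The paper avoids this wrinkle by simply invoking Lemma~\ref{lemma:capacities} verbatim on all terminals and only modifying the final counting step.
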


\begin{proof}
 For this proof, we can use most of the argumentation of the proof of Theorem~\ref{theorem:flowpriceupperbound}. The differences start only in the last part, where the differences in flow value between the original network $N$ and the network with capacitated auxiliary edges $N''$ is considered. In the proof of Theorem~\ref{theorem:flowpriceupperbound}, we partitioned the sources and sinks, but now we have either a single source or a single sink which does not need to be partitioned. Let us assume now that we have a single sink, the case with a single source follows analogously. 
 We partition the sources as follows. 
 \ama
  S^+_1 &:= \setc{s^+ \in S^+}{u_{s^+} < \infty}, S^+_2 := \setc{s^+ \in S^+}{u_{s^+} = \infty},\\
 \ema
 We assume $b(S^+_1) < B/2$, because otherwise there is nothing to show. This implies that $b(S^+_2) \geq B/2$, however. We can now consider the network $N'$ with the sources $S^+_1$ removed and refer to the resulting network as $N'(S^+_2,S^-)$. Since $B$ units of flow can be sent in $N$ (and therefore $N'$ as well), we must be able to send at least $b(S^+_2)$ units in $N'(S^+_2,S^-)$, since we still have all sinks available. Because of $b(S^+_2) \geq B/2$, this proves the first part of the claim. For the second part, the lower bound, consider the construction from Theorem~\ref{theorem:lowerboundflow}.
 
  If we restrict the network described there to $s_2,s_3,v_1,v_2,v_4$ and set the balance of $v_4$ to $-2$, we can apply the same argumentation as in Theorem~\ref{theorem:lowerboundflow} to get a proof for the case of a single sink. For the case of a single source, we do something similar, but have to change something more. The result can be seen in Fig.~\ref{fig:lowerboundexample2}; the argumentation is analogous to Theorem~\ref{theorem:lowerboundflow}. \qed
 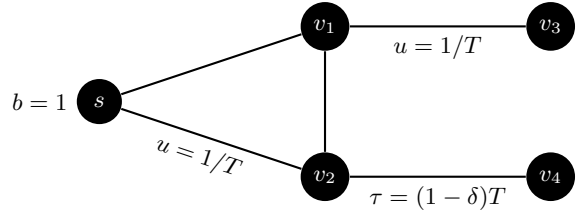
\begin{figure}[tb]
  \centering
  \begin{tikzpicture}
   \node[fill,circle,minimum size=6mm,label={west:$b = 1$}] (a) at (0,1) {\textcolor{white}{$s$}};
   \node[fill,circle,minimum size=6mm] (b1) at (3,0) {\textcolor{white}{$v_2$}}
    edge[thick,-] node[auto,sloped,pos=0.25] {$u = 1/T$} (a);
   \node[fill,circle,minimum size=6mm] (b2) at (3,2) {\textcolor{white}{$v_1$}}
    edge[thick,-] (a)
    edge[thick,-] (b1);
   \node[fill,circle,minimum size=6mm] (c1) at (6,0) {\textcolor{white}{$v_4$}}
    edge[thick,-] node[auto,sloped,pos=0.5] {$\tau = (1-\delta)T$} (b1);
   \node[fill,circle,minimum size=6mm] (c2) at (6,2) {\textcolor{white}{$v_3$}}
    edge[thick,-] node[auto,sloped,pos=0.5] {$u = 1/T$} (b2);
  \end{tikzpicture}
  \caption{An undirected network with a single source where every orientation can send at most one half of the flow possible in the undirected setting. Not specified transit times and balances are 0 and not specified capacities are infinite.
\label{fig:lowerboundexample2}}
 \end{figure} 
\end{proof}

\subsection{Price in Terms of the Time Horizon}

In this part, we examine by how much we need to extend the time horizon in order to send as much flow in an orientation as in the undirected network. It turns out that there are instances for which we have to increase the time horizon by a factor that is linear in the number of nodes. This is due to the fact that we have to send everything, which can force us to send some flow along very long detours -- this is similar to what occurs in~\cite{GrKaScSc:12}. For this reason it is not a good idea to pay the price of orientation in time alone. 

\begin{theorem}
 \label{thm:price:time}
 There are undirected networks over time $N = (G, u, b, \tau, T+1)$ with either a single source or a single sink in which $B$ units of flow can be sent within a time horizon of $T$, but it takes a time horizon of at least $(n-1)/4  \cdot T$ to send $B$ units of flow in any orientation $\orient{N}$ of $N$. This bound also holds if $G$ is a tree with multiple sources and sinks. 
\end{theorem}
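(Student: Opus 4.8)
The plan is to exhibit an explicit family of instances $N_k$, parametrised by an integer $k$, on $n=\Theta(k)$ nodes, and to argue about the undirected network and about orientations separately. I would build $N_k$ from $k$ copies of a gadget that in isolation behaves like the network of Fig.~\ref{fig:lowerboundexample}: each gadget has a ``short side'' and a ``long side'', the long side being an edge (or short path) of transit time $\approx T$; in the undirected gadget both sides are usable, so one unit of supply reaches its sink within time $T$, but in every orientation of the gadget one side is blocked, forcing that gadget's flow onto the long side. The $i$-th copy is either rescaled so its long side has transit time $\approx iT$, or the copies are cascaded so that a single flow unit is forced successively through the long side of gadget $1$, then gadget $2$, and so on. The gadgets are glued along a common terminal so that the whole instance has a single source (or, symmetrically, a single sink), exactly as in the passage from Theorem~\ref{theorem:lowerboundflow} to Theorem~\ref{corollary:upperboundsingle}; the total supply is $B$, there are $\Theta(k)$ nodes, and the global time horizon is set to $T+1$.

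\textbf{Feasibility in the undirected network.} I would give the routing explicitly, one gadget at a time: in each gadget the unit of flow entrusted to it is sent along the short side, using the relevant edges in whichever direction is convenient at each moment -- exactly as the three source--sink pairs are routed in the proof of Theorem~\ref{theorem:lowerboundflow}, where the edges $\{v_1,v_2\}$ and $\{v_3,v_4\}$ each carry flow in both directions. Since the short side of every gadget has total transit time $o(T)$ and the binding capacities are of order $1/T$, all $B$ units are delivered within time $T$ (the slack up to $T+1$ stays unused), which establishes the ``positive'' half of the separation.

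\textbf{Every orientation needs time at least $(n-1)/4\cdot T$.} Fix an arbitrary orientation. For each gadget consider its critical edge (the analogue of $\{v_3,v_4\}$): orienting it one way makes one source--sink pair of the gadget unroutable unless its flow detours over the long side, and orienting it the other way does the same to the complementary pair. The case analysis already carried out for a single gadget in the proof of Theorem~\ref{theorem:lowerboundflow}, now propagated through the cascade, shows that in \emph{every} orientation some unit of the required $B$ units must traverse the long sides of all gadgets from some index on, hence a path of total transit time $\sum_{i}\Omega(T)=\Omega(kT)$; choosing the constants so that this sum is at least $(n-1)/4\cdot T$ finishes this part, and orientations in which the required flow cannot be routed at all only make the needed time larger and so satisfy the bound trivially. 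For the tree statement I would replace each gadget (which contains a cycle) by a tree-shaped gadget -- a small caterpillar whose spine edge plays the role of the critical edge and whose leaves carry supplies and demands -- and cascade these in the same way; the same bidirectional routing works in the undirected tree and the same case analysis forces the $\Omega(kT)$ detour, but now one inevitably uses several sources and several sinks, which is why the tree version is stated only in that generality.

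\textbf{Main obstacle.} The delicate step is the third one: making the ``frustration'' of the individual gadgets compose, i.e.\ ruling out an orientation that pays the detour in only a few gadgets and routes cheaply through the rest. This forces the cascade to be wired so that the detour imposed in gadget $i$ physically lies on the only available continuation of the flow already forced to detour in gadgets $1,\dots,i-1$; tuning the transit times and the tiny capacities (of order $1/T$) and short transit times (of order $\delta T$) so that no orientation escapes, while the undirected instance still fits in time $T$, is where the real work lies.
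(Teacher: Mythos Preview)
Your proposal has the right high-level shape (a chain of gadgets, a spine of length $kT$, a single terminal), but the mechanism you intend to use does not match what is actually needed, and the gap you flag as ``the main obstacle'' is precisely where the construction must do something you have not described.

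The Fig.~\ref{fig:lowerboundexample} gadget with its $1/T$ capacities is tuned to lose \emph{flow value}: a wrongly oriented critical edge lets only $O(\eps)$ units through the alternative route. For a \emph{time} lower bound we must send \emph{all} $B$ units, and a bottleneck of capacity $1/T$ delays one unit by only $T$, not $kT$; so a wrongly oriented gadget costs you $O(T)$, not $\Omega(kT)$, and an orientation that pays this price in just one gadget escapes your bound. Your other option, rescaling the long side of gadget $i$ to transit time $iT$, breaks the undirected feasibility in time $T$.

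The paper's construction resolves this with a different device: geometrically scaled supplies and capacities. Source $s_i$ has supply $(nT)^i$, and its private escape edge $\{w_i,t_i\}$ has capacity only $(nT)^{i-1}$. The argument is then a clean disjunction rather than a cascade: if \emph{any} edge $\{v_i,w_i\}$ is oriented as $(v_i,w_i)$, the supply of $s_i$ is forced through its escape edge and needs time $(nT)^i/(nT)^{i-1}=nT$; if \emph{all} are oriented as $(w_i,v_i)$, then $s_0$'s supply must run along the whole spine $v_0,\dots,v_k$, taking $kT$ time. In the undirected network every $s_i$ ships its supply one step along the spine to $t_{i+1}$, using $\{v_i,w_i\}$ and $\{v_{i+1},w_{i+1}\}$ in opposite directions, so everything finishes in $T+1$. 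The tree version simply deletes the common sink $t$ and leaves the demands at the $t_i$, which makes the graph a tree at the cost of having multiple sinks. Your proposal does not contain (or hint at) the supply/capacity scaling that makes the disjunctive argument work.
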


\begin{proof}
 We define a family of undirected networks over time $N_k$ by
 \ama
  V(N_k) &:= \set{s_0,v_0,v_k,t_k,t} \cup \setc{s_i,t_i,v_i,w_i}{i=1,\dots,k-1}, \\
  E(N_k) &:= \set{\set{s_0,v_0},\set{v_{k-1},v_k},\set{v_k,t_k},\set{t_k,t}} \\
         &\ \ \ \cup \setc{\set{s_i,w_i},\set{t_i,w_i},\set{w_i,v_i},\set{v_i,v_{i-1}},\set{t_i,t}}{i=1,\dots,k-1}. \\ 
 \ema
 We define capacities, transit times and balances for this network by
 \ama
  u_e &:= \begin{cases} (nT)^{i-1} & e = \set{w_i,t_i}\\ \infty & \text{else} \end{cases}, \tau_e := \begin{cases} T & e = \set{v_i,v_{i-1}}\\ 0 & \text{else} \end{cases}, \\ b_v &:= \begin{cases} (nT)^i & v = s_i \\ -\sum_{i=0}^{k-1} (nT)^i & v = t\\ 0 & \text{else} \end{cases}.
 \ema
 Fig.~\ref{fig:timeboundsources} depicts such a network $N_k$.
 \begin{figure}[tb]
  \centering
  \begin{tikzpicture}
   \node[fill,circle,minimum size=7mm,label={south:$b = (nT)^0$}] (s0) at (0,0) {\textcolor{white}{$s_0$}};
   \node[fill,circle,minimum size=7mm] (v0) at (0,2.5) {\textcolor{white}{$v_0$}}
    edge[thick,-] (s0);
   \node[fill,circle,minimum size=7mm] (v1) at (3,2.5) {\textcolor{white}{$v_1$}}
    edge[thick,-] node[auto,swap] {$\tau = T$} (v0);
   \node[fill,circle,minimum size=7mm] (w1) at (3,1.25) {\textcolor{white}{$w_1$}}
    edge[thick,-] (v1);
   \node[fill,circle,minimum size=7mm,label={south:$b = (nT)^1$}] (s1) at (2.5,0) {\textcolor{white}{$s_1$}}
    edge[thick,-] (w1);
   \node[fill,circle,minimum size=7mm] (t1) at (3.5,0) {\textcolor{white}{$t_1$}}
    edge[thick,-] node[auto,swap] {$u=(nT)^0$} (w1);

   \node[fill,circle,minimum size=7mm] (v2) at (6,2.5) {\textcolor{white}{$v_2$}}
    edge[thick,-] node[auto,swap] {$\tau = T$} (v1);
   \node[fill,circle,minimum size=7mm] (w2) at (6,1.25) {\textcolor{white}{$w_2$}}
    edge[thick,-] (v2);
   \node[fill,circle,minimum size=7mm,label={south:$b = (nT)^2$}] (s2) at (5.5,0) {\textcolor{white}{$s_2$}}
    edge[thick,-] (w2);
   \node[fill,circle,minimum size=7mm] (t2) at (6.5,0) {\textcolor{white}{$t_2$}}
    edge[thick,-] node[auto,swap] {$u=(nT)^1$} (w2);    
   
   \node (v3) at (8,2.5) {}
    edge[thick,-,dashed] node[auto,swap] {$\tau = T$} (v2);
   
   \node (w3) at (8.75,1.25) {$\dots$};
   
   \node (v4) at (9,2.5) {};
   
   \node[fill,circle,minimum size=7mm] (vk) at (11,2.5) {\textcolor{white}{$v_k$}}
    edge[thick,-,dashed] node[auto,swap] {$\tau = T$} (v4);
   \node[fill,circle,minimum size=7mm] (tk) at (11,0) {\textcolor{white}{$t_k$}}
    edge[thick,-] (vk);      
    
   \node[fill,circle,minimum size=7mm,label={south:$b=-\sum_{i=0}^{k-1} (nT)^i$}] (t) at (6.25,-1.75) {\textcolor{white}{$t$}};
   \draw[thick,-] (t1) |- (t);
   \draw[thick,-] (t2) -- ++(0,-1.0) -- (t);
   \draw[thick,-] (tk) |- (t);

  \end{tikzpicture}
  \caption{An undirected network with a single sink where every orientation requires a time horizon that is larger by a factor of at least $ (n-1)/4 $ compared to the undirected setting. Not specified transit times and balances are 0 and not specified capacities are infinite.
\label{fig:timeboundsources}}
 \end{figure}
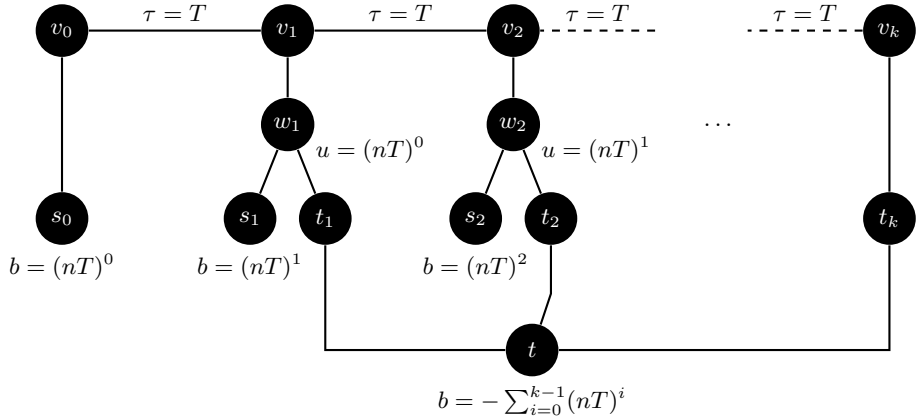
 It is possible to fulfill all supplies and demands in time $T+1$ in the undirected network, if we route the supply of source $s_i$ through $v_i$, $v_{i+1}$, $w_{i+1}$ and $t_{i+1}$ to $t$. However, this requires using the $\set{v_i,w_i}$-edges in both directions. If we orient a $\set{v_i,w_i}$ edge as $(v_i,w_i)$, we can only route the supply of $s_i$ via $w_i$ and $t_i$ to $t$, which requires $nT$ time units, due to the supply of $s_i$ and the capacity of $\set{w_i,t_i}$. If we orient all $\set{v_i,w_i}$ edges as $(w_i,v_i)$, we have to route the supply from $s_0$ via $v_0, v_1, \dots, v_k$ and $t_k$ to $t$, which requires $kT$ time units. By construction of the network, we have $k = (n-1)/4$, which proves the claimed factor.
 
 A similar construction can be employed in networks with a single source and multiple sinks (see Fig.~\ref{fig:timeboundsinks}). If we want to show the result for graphs $G$ that are trees, we can remove $t$ and shift the demand to the nodes $t_i$, $i=1,\dots,k$ and give node $t_i$ a demand of $-(nT)^{i-1}$.
 \begin{figure}[tb]
  \centering
  \begin{tikzpicture}[xscale=0.9]
   \node[fill,circle,minimum size=7mm] (s0) at (0,0) {\textcolor{white}{$s_0$}};
   \node[fill,circle,minimum size=7mm] (v0) at (0,2.5) {\textcolor{white}{$v_0$}}
    edge[thick,-] (s0);
   \node[fill,circle,minimum size=7mm] (v1) at (3,2.5) {\textcolor{white}{$v_1$}}
    edge[thick,-] node[auto,swap] {$\tau = T$} (v0);
   \node[fill,circle,minimum size=7mm] (w1) at (3,1.25) {\textcolor{white}{$w_1$}}
    edge[thick,-] (v1);
   \node[fill,circle,minimum size=7mm] (s1) at (2.5,0) {\textcolor{white}{$s_1$}}
    edge[thick,-] node[auto] {$u=(nT)^{k-2}$} (w1);
   \node[fill,circle,minimum size=7mm,label={-87:$b = -(nT)^{k-1}$}] (t1) at (3.5,0) {\textcolor{white}{$t_1$}}
    edge[thick,-] (w1);

   \node[fill,circle,minimum size=7mm] (v2) at (6,2.5) {\textcolor{white}{$v_2$}}
    edge[thick,-] node[auto,swap] {$\tau = T$} (v1);
   \node[fill,circle,minimum size=7mm] (w2) at (6,1.25) {\textcolor{white}{$w_2$}}
    edge[thick,-] (v2);
   \node[fill,circle,minimum size=7mm] (s2) at (5.5,0) {\textcolor{white}{$s_2$}}
    edge[thick,-] node[auto] {$u=(nT)^{k-3}$} (w2);
   \node[fill,circle,minimum size=7mm,label={south east:$b = -(nT)^{k-2}$}] (t2) at (6.5,0) {\textcolor{white}{$t_2$}}
    edge[thick,-] (w2);    
   
   \node (v3) at (8,2.5) {}
    edge[thick,-,dashed] node[auto,swap] {$\tau = T$} (v2);
   
   \node (w3) at (9,1.25) {$\dots$};
   
   \node (v4) at (10,2.5) {};
   
   \node[fill,circle,minimum size=7mm] (vk) at (12,2.5) {\textcolor{white}{$v_k$}}
    edge[thick,-,dashed] node[auto,swap] {$\tau = T$} (v4);
   \node[fill,circle,minimum size=7mm,label={south:$b = -(nT)^{0}$}] (tk) at (12,0) {\textcolor{white}{$t_k$}}
    edge[thick,-] (vk);      
    
   \node[fill,circle,minimum size=7mm,label={south:$b=\sum_{i=0}^{k-1} (nT)^i$}] (s) at (6.25,-1.75) {\textcolor{white}{$s$}};
   \draw[thick,-] (s0) |- (s);
   \draw[thick,-] (s1.265) -- ++(0,-1.0) -- (s);
   \draw[thick,-] (s2.265) -- ++(0,-1.0) -- (s);

  \end{tikzpicture}
  \caption{An undirected network with a single source where every orientation requires a time horizon that is larger by a factor of at least $(n-1)/4$ compared to the undirected setting. Not specified transit times and balances are 0 and not specified capacties are infinite.\label{fig:timeboundsinks}}
 \end{figure} 
 \qed
\end{proof}

A similar bound can be obtained for trees with unit capacities. Consider the instance depicted in Fig.~\ref{fig:tub}. In the undirected network, we can send the supply from a source $s_i$ to the sink $t_i$ within the time horizon of $T+1$. In any orientation, we have to use the supply of a source $s_i$, $1 < i \leq k$ to fulfill the demand of sink $t_{i-1}$. This forces us to use the supply of $s_1$ to fulfill the demand of $t_k$, which takes at least $kT+1$ time units.

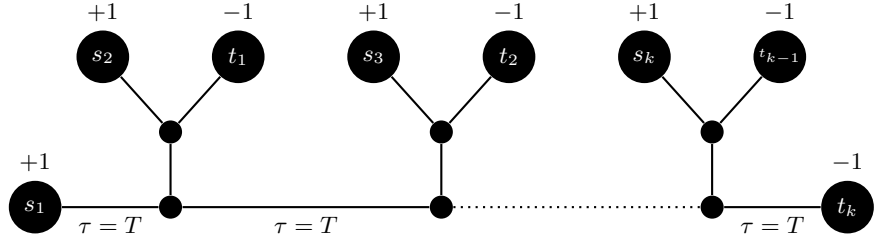
\begin{figure}[tb]
 \centering
 \begin{tikzpicture}[xscale=0.9]
  \node (d1) at (0,0)[fill,circle,minimum size=7mm,label={90: $+1$}] {\textcolor{white}{$s_1$}};
  \node (d2) at (2,0)[fill,circle] {};
  \node (d3) at (2,1)[fill,circle] {};
  \node (d4) at (3,2)[fill,circle,minimum size=7mm,label={90: $-1$}] {\textcolor{white}{$t_1$}};
  \node (d5) at (1,2)[fill,circle,minimum size=7mm,label={90: $+1$}] {\textcolor{white}{$s_2$}};
  \node (d6) at (6,0)[fill,circle,minimum size=0mm] {};
  \node (d7) at (6,1)[fill,circle,minimum size=0mm] {};
  \node (d8) at (7,2)[fill,circle,minimum size=7mm,label={90: $-1$}] {\textcolor{white}{$t_2$}};
  \node (d9) at (5,2)[fill,circle,minimum size=7mm,label={90: $+1$}] {\textcolor{white}{$s_3$}};  
  \node (d10) at (10,0)[fill,circle,minimum size=0mm] {};
  \node (d11) at (10,1)[fill,circle,minimum size=0mm] {};
  \node (d12) at (11,2)[fill,circle,minimum size=7mm,inner sep=0mm,label={90: $-1$}] {\textcolor{white}{\tiny $t_{k-1}$}};
  \node (d13) at (9,2)[fill,circle,minimum size=7mm,label={90: $+1$}] {\textcolor{white}{$s_k$}};    
  \node (d14) at (12,0)[fill,circle,minimum size=7mm,label={90: $-1$}] {\textcolor{white}{$t_k$}};
  \draw [thick](d1) -- node[auto,swap] {$\tau = T$} (d2);
  \draw [thick](d2) -- (d3);
  \draw [thick](d3) -- (d4);
  \draw [thick](d5) -- (d3);
  \draw [thick](d2) -- node[auto,swap] {$\tau = T$} (d6);
  \draw [thick](d6) -- (d7);
  \draw [thick](d7) -- (d8);
  \draw [thick](d9) -- (d7); 
  \draw [thick, dotted](d6) -- (d10);
  \draw [thick](d10) -- (d11);
  \draw [thick](d11) -- (d12);
  \draw [thick](d13) -- (d11);
  \draw [thick](d10) -- node[auto,swap] {$\tau = T$} (d14);
 \end{tikzpicture}
 \caption{An undirected network with unit capacities where all supplies and demands can be fulfilled within a time horizon of $T+1$. However, any orientation requires a time horizon of at least $kT+1$. Not specified transit times and balances are 0.\label{fig:tub}}
\end{figure}

\subsection{Price in Terms of Flow and Time Horizon}

We have seen now that the price of orientation is 3 with regard to the flow value, and $\Omega(n)$ with regard to the time horizon. We can improve on these bounds if we allow to pay the price of orientation partly in terms of flow value and partly in terms of the time horizon. This is possible by combining the reduction to maximum flows over time from Theorem~\ref{theorem:flowpriceupperbound} with the concept of temporally averaged flows (see , \eg, \cite{FLSK07}).

\begin{theorem}
 \label{theorem:bicriteria}
 Let $N = (G, u, b, \tau, T)$ be an undirected network over time, in which $B$ units of flow can be sent within the time horizon $T$. Then there exists an orientation $\orient{N}$ of $N$ in which at least $B/2$ units of flow can be sent within time horizon $2T$. The orientation and a transshipment over time with this property can be obtained in polynomial time.
\end{theorem}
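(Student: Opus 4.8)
The plan is to realize the combination hinted at above: apply the reduction of Theorem~\ref{theorem:flowpriceupperbound} (add a super source $s$ and super sink $t$ joined to the terminals by auxiliary edges of infinite capacity and zero transit time, giving $N'$) and then \emph{temporally average} an optimal transshipment over time of $N$ in order to extract, at once, an orientation of $N$ and a static flow in it; finally, temporally repeat that static flow over the doubled horizon $2T$. Concretely, by hypothesis there is a flow over time $f$ in $N$ with $|f|=B$ within horizon $T$; since the total supply is $B$, such an $f$ saturates all balances, i.e.\ it routes exactly $b_{s^+}$ out of every $s^+\in S^+$ and exactly $-b_{s^-}$ into every $s^-\in S^-$ over $[0,T)$. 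We may take $f$ to be a sum of path chain flows (so it has no cyclic component); for the algorithmic statement, $f$ is obtained in polynomial time as polynomially many chain flows by the Hoppe--Tardos algorithm~\cite{HOPTAR00}, whose optimal time horizon for $N$ is at most $T$.

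Next, view $f$ as a flow over time in $N'$ and, for every undirected edge $e=\set{v,w}$ of $G$, let $\bar x_e$ be $1/T$ times the \emph{net} amount of flow that $f$ sends across $e$ over $[0,T)$, and orient $e$ in the direction of this net flow (arbitrarily if it is $0$); call the resulting orientation $\orient{N}$. By the standard properties of temporally averaged flows (cf.~\cite{FLSK07}), $\bar x$ is a feasible static flow in $\orient{G}$: it obeys $\bar x_e\le u_e$, conserves flow at every non-terminal node, and has net excess $b_v/T$ at every terminal $v$, so $\bar x$ is a static transshipment for the scaled balances $b_v/T$ and has value $B/T$. The decisive observation is that $\bar x$ is cheap: $\sum_e\tau_e\bar x_e<B$. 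Indeed, decomposing $f$ into chain flows, every flow unit travels from a source to a sink within the horizon, hence the transit times of the edges along its walk sum to less than $T$; summing over the $B$ units gives $\sum_e\tau_e\psi_e<BT$, where $\psi_e$ is the total amount of flow $f$ sends across $e$ over $[0,T)$, and $T\bar x_e\le\psi_e$ finishes the estimate.

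Now decompose $\bar x$ into $S^+$--$S^-$ paths plus cycles and discard the cycles, writing $\bar x=\sum_P\bar x_P\chi_P$; then $\sum_P\bar x_P=B/T$, the flow leaving $s^+$ (resp.\ entering $s^-$) over all paths equals $b_{s^+}/T$ (resp.\ $-b_{s^-}/T$), and $\sum_P\bar x_P\tau_P\le\sum_e\tau_e\bar x_e<B$. In $\orient{N}$, send flow at rate $\bar x_P$ along each $P$ during the interval $[0,d_P)$ with $d_P:=\min\set{T,\ \max\set{0,\,2T-\tau_P}}$. This is a feasible flow over time within horizon $2T$: the rate on any edge $e$ is at all times at most $\sum_{P\ni e}\bar x_P=\bar x_e\le u_e$; since $d_P\le T$, at most $T\sum_{P\text{ from }s^+}\bar x_P=b_{s^+}$ leaves each source and at most $-b_{s^-}$ enters each sink, so the balances are respected; and $d_P+\tau_P\le 2T$ ensures all routed flow arrives in time. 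Using $d_P\ge T-\tau_P/2$ for every $\tau_P\ge0$, the delivered value is
\[
 \sum_P\bar x_P d_P\ \ge\ \sum_P\bar x_P\Bigl(T-\tfrac{\tau_P}{2}\Bigr)\ =\ T\cdot\tfrac{B}{T}-\tfrac{1}{2}\sum_P\bar x_P\tau_P\ >\ B-\tfrac{B}{2}\ =\ \tfrac{B}{2}.
\]
Every step above is polynomial, so the orientation and a transshipment over time with the claimed property are found in polynomial time.

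The routine ingredients are the temporal repetition in the last step and the elementary inequality $\min\set{T,\,2T-\tau}\ge T-\tau/2$. The part that needs care is the averaging step: through the gadget construction defining flows over time in undirected networks one must check that ``net flow across $e$'' is well defined and that its average is genuinely a feasible static flow \emph{in the chosen orientation} (this is exactly where the ability of $f$ to use an edge in both directions is traded for an honest orientation), and one must establish the elapsed-time bound $\sum_e\tau_e\psi_e<BT$ for an arbitrary flow over time, including those using storage at intermediate nodes; this is where I expect the main technical effort to go.
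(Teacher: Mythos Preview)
Your proof is correct and shares the key idea with the paper's argument---temporally averaging a feasible transshipment to obtain a static flow, then temporally repeating within the doubled horizon---but the two executions differ in an instructive way. The paper encodes the balances as capacities $b_v/T$ on the auxiliary edges, shows via averaging that the resulting single-source single-sink network admits a flow over time of value at least $B$ within horizon $2T$, runs Ford--Fulkerson to obtain a temporally repeated (hence one-directional) optimum, and finally \emph{halves} it: since each auxiliary edge has capacity $b_v/T$ and the horizon is $2T$, at most $2b_v$ passes through any terminal, so halving restores the original balances while keeping value $\ge B/2$. You instead take the \emph{net} average per edge to get both the orientation and the static flow $\bar x$ in one stroke, then path-decompose and repeat each path for a duration $d_P\le T$ capped so that $d_P+\tau_P\le 2T$; the balance constraints follow from $d_P\le T$, and the value bound comes from the transit-time cost estimate $\sum_P\bar x_P\tau_P\le B$ together with $d_P\ge T-\tau_P/2$. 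The paper's route is slicker (the capacity-$b_v/T$ and halving trick replaces your explicit cost accounting, and Ford--Fulkerson delivers the orientation automatically), while your route is more self-contained and makes transparent \emph{why} $2T$ and $B/2$ arise: they are exactly what the Ford--Fulkerson value formula $T\cdot|\bar x|-\text{cost}(\bar x)$ gives when cost is at most $B$ and the horizon is $2T$. One minor point: your strict inequality $>B/2$ relies on $\sum_e\tau_e\psi_e<BT$ being strict, which is delicate at the boundary; the safe (and sufficient) claim is $\ge B/2$, matching the theorem.
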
 

\begin{proof}
 In order to prove this claim, we will create a modified network with a larger time horizon in which we can send a temporally repeated flow which uses each edge in only one direction. This gives us then an orientation with the desired properties. 
 Consider the network $N' = (G', u', b', \tau', 2T)$ defined by 
 \ama
  V(G') &:= V(G) \cup \set{s,t},\\ \quad 
  E(G') &:= E(G) \cup \setc{\set{s,v}}{b_v > 0} \cup \setc{\set{v,t}}{b_v < 0},\\
  u'_e  &:= \begin{cases} \frac{b_v}{T} & e = \set{s,v} \\ \frac{-b_v}{T} & e = \set{v,t} \\ u_e & \text{else}\end{cases}, \quad
  \tau'_e := \begin{cases} \tau_e & e \in E(G) \\ 0 & \text{else} \end{cases}, \\
  b'_v  &:= \begin{cases} 0 & v \in V(G) \\ B & v = s \\ -B & v = t \end{cases}.\\
 \ema 
 An illustration can be found in Fig.~\ref{fig:bicriteria}.  
 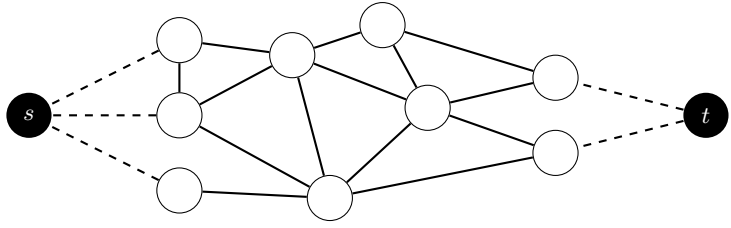
\begin{figure}[tb]
  \centering
  \begin{tikzpicture}
   \node[fill,circle,minimum size=6mm] (s) at (0,0) {\textcolor{white}{$s$}};
   \node[draw,circle,minimum size=6mm] (a) at (2,-1) {}
    edge[thick,-,dashed] (s);
   \node[draw,circle,minimum size=6mm] (b) at (2,-0) {}
    edge[thick,-,dashed] (s);
   \node[draw,circle,minimum size=6mm] (c) at (2,1) {}
    edge[thick,-] (b) 
    edge[thick,-,dashed] (s);
   \node[draw,circle,minimum size=6mm] (d) at (3.5,0.8) {}
    edge[thick,-] (b)
    edge[thick,-] (c);
   \node[draw,circle,minimum size=6mm] (e) at (4.0,-1.1) {}
    edge[thick,-] (a)
    edge[thick,-] (b)
    edge[thick,-] (d);
   \node[draw,circle,minimum size=6mm] (f) at (4.7,1.2) {}
    edge[thick,-] (d);
   \node[draw,circle,minimum size=6mm] (g) at (5.3,0.1) {}
    edge[thick,-] (d)
    edge[thick,-] (e)
    edge[thick,-] (f);
   \node[draw,circle,minimum size=6mm] (h) at (7.0,0.5) {}
    edge[thick,-] (f)
    edge[thick,-] (g);
   \node[draw,circle,minimum size=6mm] (i) at (7.0,-0.5) {}
    edge[thick,-] (e)
    edge[thick,-] (g);
   \node[fill,circle,minimum size=6mm] (t1) at (9,0) {\textcolor{white}{$t$}}
    edge[thick,-,dashed] (h)
    edge[thick,-,dashed] (i);
  \end{tikzpicture}
  \caption{The modified network consisting of the original network (white) and the newly introduced nodes (black) and auxiliary edges (dashed). \label{fig:bicriteria}}
 \end{figure}
 We know that there is a transshipment over time $f$ that sends $B$ flow units within time $T$ in $N$. We can decompose this transshipment into flow along a family of paths $\mathcal{P}$ with $\tau_P < T$ for all $P \in \mathcal{P}$ and interpret $f$ as sending flow into paths $P \in \mathcal{P}$ at a rate of $f_P(\theta)$ at time $\theta$. Now consider a transshipment over time $f'$ that is defined by sending flow into the same paths as $f$, but at an averaged rate of $f'_P(\theta) := \frac{1}{T}\int_{0}^T f_P(\xi)\ d\xi  $ for a path $P$ and a time $\theta \in [0,T)$. Since all paths $P \in \mathcal{P}$ have $\tau_P < T$, $f'$ sends its flow within a time horizon of $2T$. $f'$ sends $B$ flow units as well, since we just averaged flow rates and the averaging guarantees that the capacities of the edges $e \in E(G') \setminus E(G)$ are not violated. We conclude that a maximum flow over time in $N'' := (G', u', \tau', s, t, 2T)$ has a value of at least $B$.
 
 Now we compute a maximum flow over time in $N''$ using the Ford-Fulkerson algorithm \cite{FoFu:62}. This algorithm computes a temporally repeated maximum flow over time $f''$ which uses each edge in only one direction. We can transform $f''$ into a transshipment over time $f^*$ for $N$ by cutting off the edges of $E(G') \setminus E(G)$. Due to $u'_{\set{s,v}} = b_v/T$, $u'_{\set{v,t}} = -b_v/T$ and the time horizon of $2T$, the resulting flow over time $f^*$ satisfies supplies and demands $b''$ with $0 \leq b''_v \leq 2b_v$ for $v \in V(G)$ with $b_v > 0$ and  $0 \geq b''_v \geq 2b_v$ for $v \in V(G)$ with $b_v < 0$. Thus, $1/2 f^*$ sends at least $B/2$ flow units in $2T$ time and uses each edge in only one direction without violating the balances $b$. Furthermore, this can be done in polynomial time, since the transformation and the Ford-Fulkerson algorithm are polynomial. This concludes the proof. \qed
\end{proof}

\paragraph{\textbf{Earliest Arrival Flows.}}

We now have tight bounds for the flow and time price of orientation for maximum or quickest flows over time. However, for application in evacuations, it would be nice if we could analyze the price of orientation for so-called earliest arrival flows as well, as they provide guarantees for flow being sent at all points in time. Unfortunately, we can create instances where not even approximate earliest arrival contraflows exist, because the trade-off between different orientations becomes too high. 

Earliest arrival flows are special quickest flows that maximize the number of flow units that have reached a sink at each point in time simultaneously. This is an objective that is very desirable in evacuation management, if the exact amount of available time is not clear in the planning stage. It is not clear that these flows exist in general, and indeed their existence depends on the number of sinks. Earliest arrival flows always exist if only one sink is present, as was first proven by Gale~\cite{GAL59}. For multiple sinks, that is usually not the case, but approximations are still possible~\cite{BaumannKoehler07,GrKaScSc:12}.

For every time $\theta \in \R^+$, let $f^*_\theta$ be a maximum flow over time with time horizon $\theta$. We define $p(\theta) := |f^*_\theta|_\theta$ and refer to the values $p(\theta)$ as the \emph{earliest arrival pattern}. An \emph{earliest arrival flow} is a flow over time $f$ which simultaneously satisfies $|f|_{\theta} = p(\theta)$ for all points in time $\theta \in [0,T)$, respectively. 

An \emph{$\alpha$-time-ap\-prox\-imate earliest arrival flow} is a flow over time $f$ that achieves at every point in time $\theta \in [0,T)$, respectively, at least as much flow value as possible at time $\theta/\alpha$, \ie, $|f|_{\theta} \geq p\left(\frac{\theta}{\alpha}\right)$. A \emph{$\beta$-value-approximate earliest arrival flow} is a flow over time $f$ that achieves at every point in time $\theta \in [0,T)$, respectively, at least a $\beta$-fraction of the maximum flow value at time $\theta$, \ie, $|f|_{\theta} \geq \frac{p(\theta)}{\beta}$.

In practice, orienting road networks is an important aspect of evacuation management. In terms of evacuations, earliest arrival flows (or approximations of them) are very desirable, as they provide optimal routings independent of the time that is available. The contraflow versions of these problems ask for an orientation $\orient{N}$ of $N$ and a flow over time $f$ in $\orient{N}$, such that $|f|_\theta = p(\theta)$, $|f|_{\theta} \geq p\left(\frac{\theta}{\alpha}\right)$ and $|f|_{\theta} \geq \frac{p(\theta)}{\beta}$, respectively, for all $\theta$. Notice that $p$ refers to the earliest arrival pattern of the undirected network in this case. 

We are able to show that earliest arrival flows and the approximations developed in~\cite{BaumannKoehler07,GrKaScSc:12} do not exist in this setting.

\begin{theorem}
 \label{theorem:lowerboundeaf}
 There are undirected networks over time $N = (G, u, b, \tau)$ for which an earliest arrival flow exists, but that do not allow for an earliest arrival contraflow. This also holds for $\alpha$-time- and $\beta$-value-approximative earliest arrival contraflows for $\alpha < T/2$ and $\beta < U$, where $T$ and $U$ are the largest transit time and capacity in the network.
\end{theorem}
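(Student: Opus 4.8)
The plan is to build a family of undirected instances $N$ with a single sink — so that the undirected earliest arrival pattern $p$ is attained by an actual earliest arrival flow, by Gale~\cite{GAL59} — in which $p$ is the superposition of several \emph{burst} profiles, each burst being a fixed number of flow units delivered almost instantaneously, while every orientation is forced to reroute all but one of these bursts along a detour of capacity~$1$ and transit length $\Theta(T)$. The instances are organised around a small collection of \emph{conflict} edges: edges that the undirected earliest arrival flow must traverse in one direction during one time window and in the opposite direction during a disjoint window. In the undirected network the two traversals of a conflict edge never compete for its capacity, because a transit-time padding of length $\Theta(T)$ placed on one side of the edge separates the windows in time; but any orientation of such an edge irrevocably destroys one of the two traversals. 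A single conflict edge already yields a factor-$2$ gap; to push the value gap up to $U$ and the time gap up to $\Theta(T)$ I would chain $\Theta(\log U)$ conflict edges into a ``selector'', so that each of $\Theta(U)$ unit bursts is enabled only if \emph{every} conflict edge on its route is oriented favourably — whence any fixed orientation leaves exactly one burst intact and pushes the remaining $\Theta(U)-1$ bursts onto their slow detours.

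The proof then proceeds in three steps. First, I would pin down $p$: verify that in the undirected network all bursts can be routed simultaneously, so that $p(\theta)$ equals the sum of the burst profiles — it jumps to its final value $\Theta(U)$ within time $\Theta(1)$ and stays there — and that this $p$ is realised by one flow over time (single sink, Gale). Second, I would fix an arbitrary orientation $\orient{N}$, read off from the orientation of the conflict edges which single burst it keeps and which it detours, and bound the resulting flow: for every $\theta$ below the common arrival time $\Theta(T)$ of the detoured bursts only the single intact burst has reached the sink, so $|f_{\orient{N}}|_\theta = O(1)$, whereas $p(\theta) = \Theta(U)$ already for $\theta = \Theta(1)$. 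Third, I would extract the witnessing times. Taking $\theta$ just below $\Theta(T)$ gives $p(\theta) \ge U \cdot |f_{\orient{N}}|_\theta$, so the $\beta$-value condition $|f_{\orient{N}}|_\theta \ge p(\theta)/\beta$ fails for every $\beta < U$; and since $p(\theta/\alpha) = \Theta(U) > O(1) = |f_{\orient{N}}|_\theta$ as soon as $\theta/\alpha$ exceeds $\Theta(1)$, which holds whenever $\alpha < T/2$, the $\alpha$-time condition $|f_{\orient{N}}|_\theta \ge p(\theta/\alpha)$ fails as well; the exact case $\alpha = \beta = 1$ is subsumed. As no orientation survives, $N$ admits no earliest arrival contraflow, and none of the stated approximate ones.

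The main obstacle is designing the selector gadget so that the following four requirements hold simultaneously: (a) every orientation enables exactly one burst — no orientation may enable two, which would collapse the value gap back to a constant; (b) when a conflict edge is oriented against a burst, that burst has \emph{only} the slow, capacity-$1$, transit-$\Theta(T)$ detour available — in particular the node on the far side of a conflict edge must carry a burst onward to the sink without simultaneously offering the burst on the near side a way to bypass the edge, which I expect to arrange via the very transit-time separation that decouples the windows, so that the onward route is only ``temporally available'' to the far-side burst; (c) the undirected network still routes all bursts together without exceeding the capacity $U$ of any conflict edge; and (d) $p$ really is the claimed superposition, i.e. no cleverer undirected routing delivers more at some intermediate time, as that would change $p$ and might let an orientation match it (for the single-sink variant part of this is automatic from the existence of earliest arrival flows). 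For the statement about trees with multiple sources and sinks one replaces the single sink by a matched set of terminals and argues non-interference exactly as in the proofs of Theorems~\ref{theorem:lowerboundflow} and~\ref{thm:price:time}.
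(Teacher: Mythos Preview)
Your proposal is over-engineered and rests on a mistaken premise. You write that ``a single conflict edge already yields a factor-$2$ gap'' and therefore plan to chain $\Theta(\log U)$ conflict edges into a selector so that exactly one of $\Theta(U)$ unit bursts survives any orientation. But a single conflict edge already delivers the full gap, provided the two paths through it have \emph{asymmetric} capacities rather than the identical unit bursts you envisage. The paper's construction has just four nodes $s,v_1,v_2,t$ and one conflict edge $\{v_1,v_2\}$: the path $s,v_1,v_2,t$ has bottleneck capacity $1$ and total transit time $2$, while $s,v_2,v_1,t$ has bottleneck capacity $U$ and transit time $T$. The undirected earliest-arrival pattern uses both; orienting $\{v_1,v_2\}$ as $(v_1,v_2)$ keeps only the fast thin path (so the value ratio against $p$ tends to $U$ at late times), and orienting it as $(v_2,v_1)$ keeps only the slow fat path (nothing meaningful arrives before roughly time $T$, while $p$ is positive from time $2$, giving the time ratio $\approx T/2$). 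Tuning $U$ against $T$ then yields whichever bound is needed. No selector, no chaining, no bursts.

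Beyond the unnecessary complexity, your proposal has an actual gap: you never build the selector, you only list desiderata (a)--(d) and call the construction ``the main obstacle''. Worse, those desiderata appear to be in tension. You want $p(\theta)$ to jump to $\Theta(U)$ within time $\Theta(1)$, i.e.\ every burst reaches the sink almost immediately; yet you also want the two traversals of each conflict edge to be separated by padding of length $\Theta(T)$ so they do not compete for capacity. If one traversal meets the edge at time $0$ and the other at time $\Theta(T)$, at least one of the two bursts carries $\Theta(T)$ of transit somewhere on its route and cannot reach the sink by time $\Theta(1)$. You could drop the separation by giving each conflict edge capacity $U$, but then requirement~(b) and the padding narrative need to be rethought from scratch. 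Finally, the closing sentence about ``trees with multiple sources and sinks'' belongs to Theorem~\ref{thm:price:time}; the present statement says nothing about trees.
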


\begin{proof}
 Consider the network depicted in Fig.~\ref{fig:eafexample}. 
 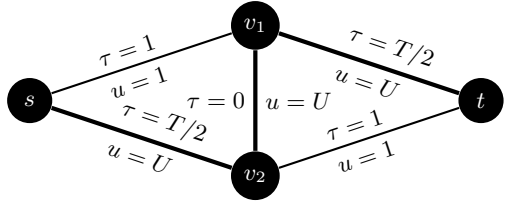
\begin{figure}[tb]
  \centering
  \begin{tikzpicture}
   \node[fill,circle,minimum size=6mm] (a) at (0,1) {\textcolor{white}{$s$}};
   \node[fill,circle,minimum size=6mm] (b1) at (3,0) {\textcolor{white}{$v_2$}}
    edge[ultra thick,-] node[auto,sloped,pos=0.25] {$u = U$} node[auto,swap,sloped,pos=0.70] {$\tau = T/2$} (a);
   \node[fill,circle,minimum size=6mm] (b2) at (3,2) {\textcolor{white}{$v_1$}}
    edge[thick,-] node[auto,sloped,pos=0.75] {$u = 1$} node[auto,swap,sloped,pos=0.35] {$\tau = 1$} (a)
    edge[ultra thick,-] node[auto] {$u = U$} node[auto,swap] {$\tau = 0$} (b1);
   \node[fill,circle,minimum size=6mm] (a) at (6,1) {\textcolor{white}{$t$}}
    edge[thick,-] node[auto,sloped,pos=0.75] {$u = 1$} node[auto,swap,sloped,pos=0.35] {$\tau = 1$} (b1)
    edge[ultra thick,-] node[auto,sloped,pos=0.25] {$u = U$} node[auto,swap,sloped,pos=0.70] {$\tau = T/2$} (b2);
  \end{tikzpicture}
  \caption{An undirected network with source $s$ and sink $t$ and capacities and transit times as specified. \label{fig:eafexample}}
 \end{figure}
 We can orient the edge $\set{v_1,v_2}$ as $(v_2,v_1)$ and have flow arriving with a rate of $U\eps$ starting at time $2T$. However, we have no flow arriving before time $T+1$ using this orientation. If we use the orientation $(v_1,v_2)$ instead, we can have flow arrive at time 2, but at a rate of 1 instead of $U$. For $U \gg T$, this trade-off makes it impossible to find an earliest arrival contraflow. 
 
 For $\alpha$-time-approximative earliest arrival flows, we can choose $U = 2T^2+T$. This yields an instance where no $\alpha < T$-approximation is possible. Sending flow at a rate of 1 using the orientation $(v_1,v_2)$ results in $U$ flow units being sent until time $2T^2+T+2$ (or with a rate of $2$, it takes $T^2+3/2T+1$). However, using the orientation $(v_2,v_1)$, we could have sent them by time $2T+1$. Sending flow at a rate of $U$ using the orientation $(v_2,v_1)$ results in no flow units being sent until time $T+1$, but flow could have been sent as early as time $2$ using the other orientation. This yields the non-approximability result for $\alpha$-time-approximations. 
 
 For $\beta$-value-approximations, we need to use the orientation $(v_1,v_2)$ to have some flow arrive starting at time $2$. However, using the other orientation allows us to send flow at a rate of $U$, yielding a ratio that converges to $U$, which concludes the proof.
 \qed
\end{proof}
\section{Complexity Results}
\label{sec:hardness}
 Furthermore, we can show non-approximability results for several contraflow over time problems. More specifically, we can show that neither quickest contraflows nor maximum contraflows over time can be approximated better than a factor of $2$, unless $P=NP$. For multicommodity contraflows over time, we can even show that maximum multicommodity concurrent contraflows and quickest multicommodity contraflows cannot be approximated at all, even with zero transit times, unless $P = NP$. 
 
 \begin{theorem}
\label{thm:hardness:quickest}
 The quickest contraflow problem cannot be approximated better than a factor of $2$, unless $P=NP$.
\end{theorem}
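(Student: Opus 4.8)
The plan is to prove hardness of approximation by a gap-producing reduction from \textbf{3-SAT}. Given a 3-CNF formula $\varphi$ with variables $x_1,\dots,x_n$ and clauses $C_1,\dots,C_m$, I would build in polynomial time an undirected network over time $N_\varphi=(G,u,b,\tau)$ together with a value $T$ such that: if $\varphi$ is satisfiable, some orientation of $N_\varphi$ admits a quickest flow with time horizon at most $T$; and if $\varphi$ is unsatisfiable, then \emph{every} orientation of $N_\varphi$ needs time horizon at least $2T$ to route all supplies into the sinks. Such a reduction suffices: a polynomial-time algorithm with approximation ratio $c<2$ for the quickest contraflow problem, run on $N_\varphi$, returns a value in $[\mathrm{OPT},c\,\mathrm{OPT}]$, hence a value $<2T$ in the YES-case (where $\mathrm{OPT}\le T$) and a value $\ge 2T$ in the NO-case (where $\mathrm{OPT}\ge 2T$); comparing the output to $2T$ would decide 3-SAT, contradicting $P\neq NP$.

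For the construction I would use one \textbf{variable gadget} per variable and one \textbf{clause gadget} per clause, glued to a common skeleton, so that $N_\varphi$ is a genuine transshipment-over-time instance with many sources and sinks. The variable gadget for $x_i$ exposes two ``literal ports'', one for $x_i$ and one for $\overline{x}_i$, each realized as a short zero-transit-time sub-path whose orientation is a decision variable of the reduction (``forward'' on the $x_i$-port encodes $x_i=\mathrm{true}$, ``forward'' on the $\overline{x}_i$-port encodes $x_i=\mathrm{false}$). The gadget additionally carries its own supply and demand -- the nodes $d_i^\pm$ of the variable-block figures, routed over incident edges of transit time $T$, and in the ``split'' variant also over the auxiliary nodes $x_i^-,\overline{x}_i^-$ -- with capacities and transit times calibrated so that the gadget's own demand is satisfiable within time $T$ under exactly those orientations that set one port ``forward'' and the other ``backward''. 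Consequently, any orientation of $N_\varphi$ that does not already incur a delay of $2T$ inside some variable gadget induces a well-defined truth assignment. A clause gadget for $C_j=(\ell_{j,1}\vee\ell_{j,2}\vee\ell_{j,3})$ carries one unit of supply and demand and offers three ``literal paths'' of transit time $T$, the path for $\ell_{j,k}$ routed through the corresponding literal port and usable within time $T$ iff that port is oriented ``forward'' (iff $\ell_{j,k}$ is true), plus one always-available ``detour'' of transit time $2T$. The YES-direction is then immediate: a satisfying assignment dictates all port orientations, each clause has a true literal and hence a cheap literal path, and all supplies reach the sinks by time $T$. In the NO-direction, an orientation either pays $2T$ already in some variable gadget, or it induces a truth assignment which, by unsatisfiability, falsifies some $C_j$; then all three literal paths of $C_j$ are blocked and its unit of flow must use the $2T$ detour, forcing the quickest flow to time $\ge 2T$.

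The \textbf{main obstacle} is the $2T$ lower bound for \emph{arbitrary} orientations in the NO-case, not only for the ``nice'' orientations coming from truth assignments. I would need to design the variable gadget so that every orientation not corresponding to a consistent truth value -- orienting both literal edges ``forward'', or both ``backward'', or orienting the transit-time-$T$ edges of $d_i^\pm$ (or of the $x_i^-,\overline{x}_i^-$ auxiliaries) adversarially -- is either dominated by a genuine truth-assignment orientation or itself already forces a delay of $2T$ somewhere; this robustness is exactly what the calibrated capacities and transit times at $d_i^\pm$ and the split-variant nodes are there to supply. A secondary difficulty is ruling out global rerouting tricks, where flow belonging to one gadget borrows edges of another to shortcut a blocked clause; I would handle this by making the only nonzero transit times the ones carrying the $T$ versus $2T$ dichotomy, by giving the literal-port and skeleton edges so much capacity that congestion is never the bottleneck, and by checking that deleting the auxiliary edges leaves no short $s$--$t$ path serving a falsified clause. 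Finally, I would verify that the transit times can be chosen so the two cases separate by a clean factor of $2$ (or, if an additive slack of the kind used elsewhere in the paper is unavoidable, so that the ratio tends to $2$ as $T\to\infty$, which still rules out every approximation factor strictly below $2$).
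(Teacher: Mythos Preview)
Your proposal is correct in outline and follows the same high-level route as the paper: a gap reduction from 3-SAT with variable gadgets whose literal-edge orientation encodes a truth value, and clause gadgets that need a ``forward'' literal edge to route quickly. The differences are in the engineering, and the paper's version is considerably simpler in ways that dissolve exactly the obstacles you flag.

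First, the paper uses \emph{infinite capacities everywhere} and only two transit-time values, $\tau_1$ and $\tau_2$ (taking $\tau_2=0$ at the end). There are no calibrated capacities and no congestion analysis; the NO-case lower bound is a pure path-length argument: every edge leaving a clause source $c_j^+$ and every edge entering a variable sink $t_i$ has transit time $\tau_1$, and all other edges have transit time $\tau_2=0$. Hence any source--sink path using fewer than two $\tau_1$-edges must go $c_j^+\!\to x_i^1\!\to x_i^2\!\to c_{j'}^-$ (forcing the literal edge $(x_i^1,x_i^2)$ forward) or $s_i\!\to x_i^2\!\to x_i^1\!\to t_i$ (forcing it backward). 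This immediately handles your ``arbitrary orientation'' obstacle: if some variable has both literal edges forward, its sink $t_i$ is unreachable in time $<2\tau_1$; otherwise a truth assignment is determined, and unsatisfiability blocks some clause source from every short path.

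Second, there is no explicit $2T$ detour edge. In the paper the ``long alternative'' for an unsatisfied clause is simply to send its supply into a variable sink via $c_j^+\!\to x_i^1\!\to t_i$, which already costs $2\tau_1$. Your added detour is harmless but unnecessary.

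Third, your ``global rerouting'' worry is neutralised structurally: the literal nodes $x_i^1,x_i^2,\bar x_i^1,\bar x_i^2$ are adjacent only to nodes of variable $i$'s own block and to clause nodes, so there is no cross-variable shortcut to borrow. With infinite capacities, no clever sharing helps either.

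Finally, the nodes $d_i^\pm$, $\hat d_i^+$, $x_i^-$, $\bar x_i^-$ you invoke belong to the paper's \emph{multicommodity} reductions (Theorems~\ref{thm:hardness:mc:concurrent} and~\ref{thm:hardness:mc:quickest}), not to this theorem; the single-commodity gadget here uses just $s_i,t_i,x_i^1,x_i^2,\bar x_i^1,\bar x_i^2$.
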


\begin{proof}
 Rebennack et. al~\cite{RebEtAl:10} showed the NP-hardness of this problem. The reduction technique they provide can also be used to show a non-approximability claim, if we modify the transit times used in their reduction. 
 We give a brief sketch of their reduction technique, which is based on the SAT problem. We construct an instance for the quickest contraflow problem from an instance for the 3-SAT problem with $\ell$ clauses $c_1,\dots,c_\ell$ over $k$ variables $x_1,\dots,x_k$ as follows.
 \begin{enumerate}
  \item For each clause $c_i$, we create a source $c_1^+$ and a sink $c_1^-$ with a supply and demand of 1 and -1, respectively. 
  \item For each variable $x_i$, we create four nodes: $x^1_i$ and $x^2_i$ for its unnegated literal, and $\bar{x}^1_i$, $\bar{x}^2_i$ for its negated literal. These nodes get neither supplies nor demands. Furthermore, we create a source $s_i$ and a sink $t_i$ with a supply and demand of 1, respectively. Finally, we create edges $\set{x^1_i,x^2_i}$, $\set{\bar{x}^1_i,\bar{x}^2_i}$, $\set{s_i,x^2_i}$, $\set{s_i,\bar{x}^2_i}$ with a transit time of $\tau_2$ and edges $\set{t_i,x^1_i}$, $\set{t_i,\bar{x}^1_i}$ with a transit time of $\tau_1$.
  \item For each clause $c_i = x_{i_1} \vee x_{i_2} \vee \bar{x}_{i_3}$ we create edges $\set{c_i^+,x_{i_1}^1}$, $\set{c_i^+,x_{i_2}^1}$, $\set{c_i^+,\bar{x}_{i_3}^1}$ with a transit time of $\tau_1$ and edges $\set{c_i^-,x_{i_1}^2},\set{c_i^-,x_{i_2}^2},\set{c_i^-,\bar{x}_{i_3}^2}$ with a transit time of $\tau_2$.
 \end{enumerate}
 All capacities are infinite. Fig.~\ref{fig:satRed} depicts such a construction. 

 \paragraph{YES-Instance $\to$ Routable in time $\tau_1+2\tau_2$.}
 We derive an orientation from an assignment for the SAT problem that fulfills all clauses. If variable $x_i$ is set to $1$ in the assignment, we orient $\set{x^1_i,x^2_i}$ as $(x^1_i,x^2_i)$ and $\set{\bar{x}^1_i,\bar{x}^2_i}$ as $(\bar{x}^2_i,\bar{x}^1_i)$. Otherwise, we orient $\set{x^1_i,x^2_i}$ as $(x^2_i,x^1_i)$ and $\set{\bar{x}^1_i,\bar{x}^2_i}$ as $(\bar{x}^1_i,\bar{x}^2_i)$. All other edges are oriented away from the sources or towards the sinks, respectively. A clause source $c_i^+$ with a fulfilled literal $x_{i}$ can send 1 flow unit along $c_i^+ \to x_i^1 \to x_i^2 \to c_i^-$, and each variable source $s_i$ can send 1 flow unit to its sink via $s_i \to x_i^2 \to x_i^1 \to t_i$ if $x_i = 0$ and $s_i \to \bar{x}_i^2 \to \bar{x}_i^1 \to t_i$ otherwise. This takes $\tau_1+2\tau_2$ time units.

 \paragraph{NO-Instance $\to$ Not routable in time $< 2\tau_1$.}
 We set $t_2 = 0$, as above. If we want to send everything in a time $< 2\tau_1$, we can only use paths containing at most one $\tau_1$ edge. It follows that supply from the clause sources needs to go to a clause sink, via an $(x^1_i,x^2_i)$ or $(\bar{x}^1_i,\bar{x}^2_i)$ edge. Similar, each variable sink $t_i$ needs to get its flow from a variable source and requires an $(x^2_i,x^1_i)$ or $(\bar{x}^2_i,\bar{x}^1_i)$ oriented edge, if we want to be faster than $2\tau_1$. Having both edges oriented as $(x^2_i,x^1_i)$ and $(\bar{x}^2_i,\bar{x}^1_i)$ does not help more than having only one of them oriented that way -- we will now assume without loss of generality, that only one of the edges is oriented that way. We can derive an assignment from the orientation of these edges. If we have $(x^2_i,x^1_i)$ in our orientation, we set $x_i = 0$ and $x_i = 1$ otherwise. However, no assignment fulfills all clauses, therefore we have to send clause supplies to variable demands, which takes $2\tau_1$.
 
 Thus, if we are able to approximate the quickest contraflow problem within a factor of $\frac{2\tau_1}{\tau_1+2\tau_2}$, then we can distinguish between YES and NO instances of the 3-SAT problem. For $\tau_2 = 0$, this yields the result. \qed
\begin{figure}[tb]
 \begin{center}
       \begin{tikzpicture} [>=stealth]
	\node (x11) at (4,6)   [blacknode] {$x^1_1$};	
	\node (x21) at (5.5,6)   [blacknode] {$x^2_1$};	
	\node (x'11) at (4,5)   [blacknode] {$\bar{x}^1_1$};	
	\node (x'21) at (5.5,5)   [blacknode] {$\bar{x}^2_1$};	
	\node (x12) at (4,4)   [blacknode] {$x^1_2$};	
	\node (x22) at (5.5,4)   [blacknode] {$x^2_2$};	
	\node (x'12) at (4,3)   [blacknode] {$\bar{x}^1_2$};	
	\node (x'22) at (5.5,3)   [blacknode] {$\bar{x}^2_2$};	
	\node (x1n) at (4,1.5)   [blacknode] {$x^1_k$};	
	\node (x2n) at (5.5,1.5)   [blacknode] {$x^2_k$};	
	\node (x'1n) at (4,0.5)   [blacknode] {$\bar{x}^1_k$};	
	\node (x'2n) at (5.5,0.5)   [blacknode] {$\bar{x}^2_k$};	
	
	\node (c1) at (1,5.5)   [blacknode,label={west: $+1$}] {$c^+_1$};	
	\node (c2) at (1,4)   [blacknode,label={west: $+1$}] {$c^+_2$};	
	\node (cm) at (1,1)   [blacknode,label={west: $+1$}] {$c^+_\ell$};	
	\node (c'1) at (8,5.5)   [blacknode,label={east: $-1$}] {$c^-_1$};	
	\node (c'2) at (8,4)   [blacknode,label={east: $-1$}] {$c^-_2$};	
	\node (c'm) at (8,1)   [blacknode,label={east: $-1$}] {$c^-_\ell$};	

	\node (xt1) at (3,7)   [blacknode,label={west: $-1$}] {$t_1$};	
	\node (xs1) at (6.5,7)   [blacknode,label={east: $+1$}] {$s_1$};	
	\node (xtn) at (3,-0.5)   [blacknode,label={west: $-1$}] {$t_k$};	
	\node (xsn) at (6.5,-0.5)   [blacknode,label={east: $+1$}] {$s_k$};	

    	\draw [thick,dashed](c1) -- (x11)   ;
	\draw [thick,dashed](c1) -- (x1n)   ;
	\draw [thick,dashed](c1) -- (x'12)  ;
	\draw [thick,dashed](c2) -- (x'11)  ;
	\draw [thick,dashed](c2) -- (x1n)  ;
	\draw [thick,dashed](cm) -- (x'1n)  ;	

	\draw [thick,dashed](xt1) -- (x11)  ;		
	\draw [thick,dashed](xt1) -- (x'11)  ;		
	\draw [thick,dashed](xtn) -- (x1n)  ;		
	\draw [thick,dashed](xtn) -- (x'1n)  ;		

   	\draw [thick](c'1) -- (x21)   ;
	\draw [thick](c'1) -- (x2n)   ;
	\draw [thick](c'1) -- (x'22)  ;
	\draw [thick](c'2) -- (x'21)  ;
	\draw [thick](c'2) -- (x2n)  ;
	\draw [thick](c'm) -- (x'2n)  ;	

	\draw [thick](xs1) -- (x21)  ;		
	\draw [thick](xs1) -- (x'21)  ;		
	\draw [thick](xsn) -- (x2n)  ;		
	\draw [thick](xsn) -- (x'2n)  ;
	
	\draw [thick](x11) -- (x21);	
	\draw [thick](x'11) -- (x'21);
	\draw [thick](x12) -- (x22);
	\draw [thick](x'12) -- (x'22);
	\draw [thick](x1n) -- (x2n);
	\draw [thick](x'1n) -- (x'2n);
	
	\node (dots) at (1,2.75) [scale = 1.5]{\textbf{$\vdots$}};	
	\node (dots) at (8,2.75) [scale = 1.5]{\textbf{$\vdots$}};	
	\node (dots) at (4.75,2.35) [scale = 1]{\textbf{$\vdots$}};	
  \end{tikzpicture}
 \end{center}
 \caption{The quickest contraflow instance derived from the SAT instance. Edges with a transit time of $\tau_1$ are dashed, edges with a transit time of $\tau_2$ are solid. $s_2$ and $t_2$ and several clause-edges are not shown. \label{fig:satRed}}
\end{figure}
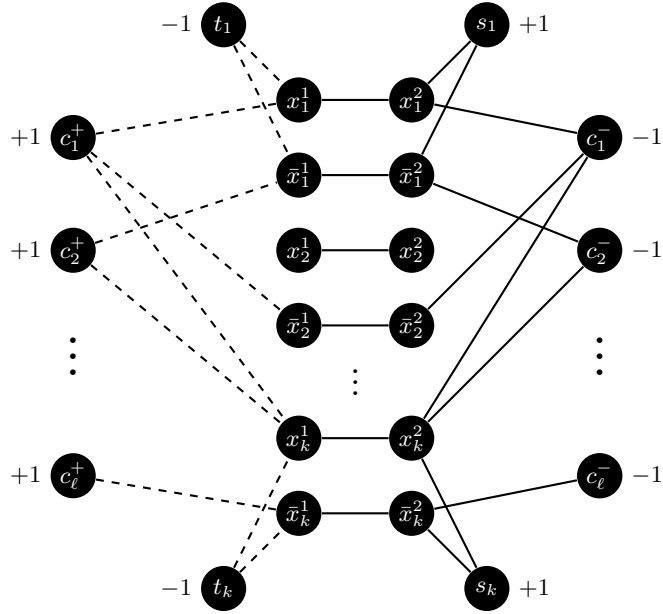 

\end{proof}

\begin{theorem}
 \label{thm:hardness:maximum}
 The maximum contraflow over time problem cannot be approximated better than a factor of $2$, unless $P = NP$.
\end{theorem}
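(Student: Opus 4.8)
The plan is to reduce from 3-SAT once more, reusing the gadget of Theorem~\ref{thm:hardness:quickest} but keeping the time horizon fixed and comparing flow values rather than makespans. Take the network of Fig.~\ref{fig:satRed} with transit times $\tau_2 = 0$ and $\tau_1 = 1$, give each clause source $c_j^+$ and each variable source $s_i$ unit supply (so $B = \ell + k$), and fix a time horizon $T$ strictly between the two routing times appearing in the proof of Theorem~\ref{thm:hardness:quickest}, i.e.\ $\tau_1 + 2\tau_2 < T < 2\tau_1$.

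For the easy direction, given a satisfying assignment I would orient the literal edges $\set{x_i^1,x_i^2}$ and $\set{\bar x_i^1,\bar x_i^2}$ exactly as in Theorem~\ref{thm:hardness:quickest} and orient the remaining edges out of the sources and into the sinks. Then every clause source sends its unit to a clause sink through a satisfied literal and every variable source sends its unit to its own sink through the other branch; all these paths have transit time $\tau_1 + 2\tau_2 < T$, so all $B$ units reach the sinks within the horizon and the value is $B$.

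For the converse I would argue that an unsatisfiable instance admits no orientation of value exceeding $B/2$ within time~$T$. The starting point is the combinatorial observation from Theorem~\ref{thm:hardness:quickest}: since $T < 2\tau_1$, a flow-carrying path uses at most one transit-time-$\tau_1$ edge, so a clause sink can be fed only by a clause source or a variable-source shortcut, while $t_i$ can be fed only by $s_i$ and only if a literal edge of variable~$i$ points ``upwards''; hence serving all variable sinks pins down a consistent assignment, which must falsify some clause and strand that clause's supply. Taken verbatim this only strands a single unit, giving a hopeless ratio of $B/(B-1)$, so the construction must be amplified: I would couple the throughput through a super-source/super-sink pair and a capacitated bottleneck so that the network carries $B/2$ units of ``clause-type'' flow and $B/2$ units of ``variable-type'' flow, with the property that an orientation passes the clause half in full only if the induced assignment covers every clause, and passes the variable half in full only if no variable is driven into the ``both branches reserved for clauses'' state --- conditions that are jointly satisfiable precisely when the formula is. This yields value $B$ for satisfiable instances and at most $B/2$ for unsatisfiable ones, so any approximation with ratio smaller than $2$ would decide 3-SAT.

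The step I expect to be the real obstacle is exactly this amplification: one needs a network in which every orientation is genuinely forced into an all-or-nothing choice between a ``clause half'' and a ``variable half'' of the supply --- the naive reduction lets an adversarial orientation retain almost all of the flow --- while simultaneously keeping all relevant path lengths below~$T$, since a long serial ``chain'' gadget that would enforce the all-or-nothing behaviour inflates the transit times and destroys the satisfiable case. The remaining ingredients, namely the time-horizon bookkeeping and the ``at most one $\tau_1$-edge'' reachability analysis, are routine adaptations of the proof of Theorem~\ref{thm:hardness:quickest}.
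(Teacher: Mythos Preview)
Your proposal is not a proof: the crucial amplification step is only described by the properties you would like it to have, not actually constructed. You correctly diagnose that the naive reuse of the 3-SAT gadget strands only a single unit and yields the useless ratio $B/(B-1)$, but the fix you sketch --- coupling ``clause-type'' and ``variable-type'' flow through a capacitated bottleneck so that an orientation is forced into an all-or-nothing choice --- remains a wish list. Nothing in your outline explains how to prevent an adversarial orientation from sacrificing one clause unit while still routing all $k$ variable units plus the remaining $\ell-1$ clause units, and you yourself flag that a serial chain enforcing all-or-nothing behaviour would blow up transit times. Until that gadget is actually written down and analysed, there is no inapproximability bound here.

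The paper sidesteps this difficulty entirely by reducing from \textsc{Partition} rather than 3-SAT. With integers $a_1,\dots,a_n$ summing to $2L$, it builds a chain $v_1,\dots,v_{n+1}$ with two parallel edges between consecutive $v_i$ (transit times $a_i$ and $0$), attaches sources $s_1,s_2$ and sinks $t_1,t_2$ so that $s_1,t_2$ sit at one end and $s_2,t_1$ at the other, and sets the time horizon to $2L+2$. A YES-instance yields two disjoint oriented $v_1$--$v_{n+1}$ and $v_{n+1}$--$v_1$ paths of length exactly $L$, so both supply units reach their sinks and the value is $2$; in a NO-instance no orientation produces two such paths of length $\le L$, and since $s_1$ cannot reach $t_2$ within the horizon at all, only one unit can be sent. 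The $2$-versus-$1$ dichotomy is thus built into the \textsc{Partition} structure itself, with no amplification needed.
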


\begin{proof}
 The following reduction is inspired by \cite{KlinzWoeginger04}. Consider an instance of the \textsc{PARTITION}-problem, given by integers $a_1$, $\dots$, $a_n$ with $\sum_{i=1}^n a_i = 2L$ for some integer $L > 0$. We create an instance for the maximum contraflow over time problem as follows:
 \begin{enumerate}
  \item We create $n+1$ nodes $v_1,\dots,v_{n+1}$, two sources $s_1,s_2$ and two terminals $t_1,t_2$. The sources have each a supply of 1, the sinks each a demand of $-1$.
  \item We create $2(n+1)$ edges $e_i = \set{v_i,v_{i+1}}$, $e'_i = \set{v_i,v_{i+1}}$ with a transit time of $a_i$ for $e_i$ and a transit time of $0$ for $e'_i$, edges $\set{s_1,v_1}$, $\set{t_2,v_1}$ with a transit times of $L+1$ and edges $\set{s_2,v_{n+1}}$, $\set{t_1,v_{n+2}}$ with a transit times of $0$. All edges have unit capacities.
  \item We set the time horizon to $2L+2$.
 \end{enumerate}
 The resulting instance is depicted in Fig.~\ref{fig:mc:max}. 

 \begin{figure}[tb]
  \centering
  \begin{tikzpicture}[xscale=0.9]
   \foreach \n / \i / \x in {1/1/0.0,2/2/2.0,3/3/4.0,4/n/6.0,5/n+1/8.0}{
    \node[node,fill=black,text=white] (v\n) at (\x,0) {$v_{\i}$};
   }   
   \foreach \n / \i / \o in {1/1/2,2/2/3,4/n/5}{
    \draw[edge, bend left = 45] (v\n.45) ..controls +(0.3,0.3) and +(-0.4,0.3).. node[auto] {$a_\i$} (v\o.135);
    \draw[edge, bend left = 45] (v\n.315) ..controls +(0.3,-0.3) and +(-0.4,-0.3).. node[auto,below] {$0$} (v\o.225);
   }
   \node[node,fill=black,text=white,label={north:$1$}] (s1) at (-2,1) {$s_{1}$};
   \node[node,fill=black,text=white,label={north:$1$}] (s2) at (10,-1) {$s_{2}$};
   \node[node,fill=black,text=white,label={north:$-1$}] (t1) at (10,1) {$t_{1}$};
   \node[node,fill=black,text=white,label={north:$-1$}] (t2) at (-2,-1) {$t_{2}$};
   \draw[edge] (s1) -- (v1) node[auto] {$L+1$};
   \draw[edge] (t2) -- (v1) node[auto] {$L+1$};
   \draw[edge] (s2) -- (v5) node[auto] {$0$};
   \draw[edge] (t1) -- (v5) node[auto] {$0$};   
   \node[] (dots) at (5.0,0.0) {$\dots$}; 
  \end{tikzpicture}
  \caption{The maximum contraflow over time problem instance.\label{fig:mc:max}}
 \end{figure}
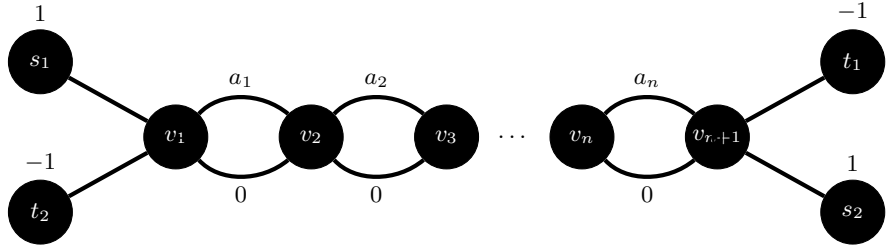

 \paragraph{YES-Instance $\to$ total flow value of 2.}
 If the PARTITION-instance is a YES-instance, there exists a subset of indices $I \subseteq \set{1,\dots,n}$ such that $\sum_{i\in I} A_i = L$. Thus, we can orient the edges so that two disjoint $v_1$-$v_{n+1}$- and $v_{n+1}$-$v_{1}$-paths with a length of $L$ are created, which gives us two disjoint paths from $s_1$ to $t_1$ and $s_2$ to $t_2$ with transit time $2L+1$, respectively, that are sufficient to send all supplies within the time horizon of $2L+2$. 

 \paragraph{NO-Instance $\to$ total flow value of 1.}
 Notice that we cannot send any flow from $s_1$ to $t_2$ within the time horizon of $2L+2$. If the PARTITION-instance is a NO-instance, then we cannot get a $v_1$-$v_{n+1}$- and a $v_{n+1}$-$v_1$-path with a length $L$, so only the demands of one of the two commodities can be fulfilled. \qed
\end{proof}

\paragraph{\textbf{Multicommodity Flows over Time.}} Maximum flows over time and quickest flows can also be defined for the case of \emph{multiple commodities}. In this case we replace the supplies and demands $b$ by supplies and demands $b^i$ for all commodities $i = 1,\dots,k$. Each commodity has to fulfill its own flow conservation constraints, and supply from one commodity can only be used for the demands of the same commodity. However, the capacities of the network are shared by all commodities.
This generalization leads to \emph{maximum multicommodity (contra)flow over time}, \emph{quickest muticommodity (contra)flow} problems. In this setting it can also be interesting to maximize the minimal fraction of flow of each commodity to its total demand. This is referred to as \emph{concurrent multicommodity (contra)flow over time} problem.
\begin{theorem}
 \label{thm:hardness:mc:concurrent}
 Unless $P = NP$, the maximum multicommodity concurrent contraflow prob\-lem over time cannot be approx\-i\-ma\-ted by time or value. This holds even in the case with zero transit times.
\end{theorem}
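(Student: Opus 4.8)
The plan is to adapt the 3-SAT reduction from the proof of Theorem~\ref{thm:hardness:quickest}, keeping the same underlying undirected graph (Fig.~\ref{fig:satRed}) but setting all transit times to $0$, all capacities equal to the number $\ell$ of clauses, and splitting the single transshipment into one unit-demand commodity per terminal pair: commodity $i$ with source $s_i$ and sink $t_i$ for each variable $x_i$, and commodity $j$ with source $c_j^+$ and sink $c_j^-$ for each clause $c_j$. For the value statement I would fix the time horizon to $1$; for the time statement I would leave it unbounded. I would also insert small, standard direction-forcing gadgets so that in any orientation every non-variable edge may be assumed to point away from a source and towards a sink (``canonically''); with zero transit times this assumption does not affect whether a given amount of any commodity can be routed.

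Completeness is essentially the ``YES-instance'' part of Theorem~\ref{thm:hardness:quickest}. From a satisfying assignment, orient the variable edges $\{x_i^1,x_i^2\}$ and $\{\bar x_i^1,\bar x_i^2\}$ exactly as there. Then every clause commodity is routed along a satisfied literal via $c_j^+\to\ell^1\to\ell^2\to c_j^-$ and every variable commodity via the falsified literal's edge; since all transit times vanish and every capacity is $\ell$, all $\ell+k$ unit demands are satisfied within time horizon $1$, so the best concurrent value over orientations is $1$ and the quickest concurrent time horizon is finite.

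The core of the argument is soundness: if the formula is unsatisfiable, then in \emph{every} orientation some commodity is disconnected from its sink, hence the best concurrent value over orientations is $0$ and no finite time horizon achieves all demands. Here I would exploit that the graph is bipartite between the ``level-$1$'' vertices $\{x_i^1,\bar x_i^1,t_i,c_j^+\}$ and the ``level-$2$'' vertices $\{x_i^2,\bar x_i^2,s_i,c_j^-\}$, with the $2k$ variable edges as the \emph{only} edges crossing this cut. Given an orientation, call a literal true exactly when its variable edge is oriented from level $1$ to level $2$. Canonicality of the non-variable edges forces any directed walk out of $c_j^+$ to stay on level $1$ until it traverses a variable edge of one of $c_j$'s literals in the $1\to2$ direction, and forces $s_i$ to reach $t_i$ only via a variable edge of $x_i$ used in the $2\to1$ direction. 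Thus each variable commodity being routed means at least one of $x_i$'s two edges points $2\to1$, so at most one points $1\to2$ and the induced (possibly partial) assignment $\alpha$ is well defined; and each clause commodity being routed means $\alpha$ satisfies that clause. Since an unsatisfiable formula admits no satisfying partial assignment, some commodity must be dead.

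Combining the two directions, deciding 3-SAT reduces to deciding whether the maximum concurrent contraflow value is positive, respectively whether the quickest concurrent contraflow time is finite; as any multiplicative approximation of a positive (finite) optimum is again positive (finite), neither quantity can be approximated within any factor unless $P=NP$, and the instance uses only zero transit times. I expect the main obstacle to be exactly the soundness step — ruling out ``accidental'' source--sink paths that wander through foreign clause or variable gadgets under an adversarial orientation — which is precisely what the level-$1$/level-$2$ cut together with the canonicality of the non-variable edges is meant to prevent; the delicate part is making the direction-forcing gadgets genuinely pin down those orientations without themselves creating new shortcuts.
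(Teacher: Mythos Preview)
Your overall plan --- a 3-SAT reduction with one commodity per variable and one per clause, and a bipartite cut separating the ``$1$''-side from the ``$2$''-side so that every source--sink path must traverse a literal edge --- is exactly the right intuition, and completeness is fine. The gap is in soundness, and it is precisely the point you flag yourself: the ``small, standard direction-forcing gadgets'' are doing all the work and you do not construct them. Without them the argument breaks. In the graph of Fig.~\ref{fig:satRed} the clause sinks $c_j^-$ sit on level~$2$ and are adjacent to \emph{several} variable blocks, so an $s_i$--$t_i$ path can leave $x_i^2$, pass through some $c_j^-$, enter a foreign block at $x_{i'}^2$, cross $2\to1$ via $x_{i'}$'s literal edge, come back through $c_j^+$, and finish at $t_i$. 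Concretely, if $x_1$ and $x_2$ share a clause $c$, one can orient $(s_1,x_1^2)$, $(x_1^2,c^-)$, $(c^-,x_2^2)$, $(x_2^2,x_2^1)$, $(x_2^1,c^+)$, $(c^+,x_1^1)$, $(x_1^1,t_1)$ and route commodity~$1$ while \emph{both} of $x_1$'s literal edges point $1\to2$; the clause commodity of $c$ can still be routed through $x_1$'s block. Hence ``at least one literal edge of $x_i$ points $2\to1$'' is false in general, the derived assignment is not well defined, and the contradiction with unsatisfiability does not follow. Designing gadgets that pin down \emph{every} non-variable edge without opening new detours is not obviously possible here, and you do not supply one.

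The paper sidesteps the whole issue by changing the topology rather than adding gadgets. It keeps a single node $c_j$ per clause on the ``$1$''-side, but replaces the global clause sinks $c_j^-$ by three \emph{per-literal} sinks $x_{i_1}^-,x_{i_2}^-,\bar x_{i_3}^-$, each of degree~$1$ and attached to $x_{i_1}^2,x_{i_2}^2,\bar x_{i_3}^2$ inside the corresponding variable block; the variable sink is similarly split into degree-$1$ nodes $d_i^-,\bar d_i^-$ attached to $x_i^1,\bar x_i^1$. The effect is that the ``$2$''-side of each variable block, namely $\{d_i^+,x_i^2,\bar x_i^2,x_i^-,\bar x_i^-\}$, has \emph{no} edges to any other block; its only exits are the two literal edges of $x_i$. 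Consequently, if both literal edges of $x_i$ point $1\to2$, the variable commodity from $d_i^+$ is trapped and gets value~$0$; and any flow reaching the clause sink $x_i^-$ must enter that isolated component via $(x_i^1,x_i^2)$ (or go through $d_i^+$, which then kills the variable commodity). This is what makes the assignment $x_i\mapsto[\,\{x_i^1,x_i^2\}\text{ is oriented }1\to2\,]$ well defined and satisfying without any auxiliary gadgets. If you want to repair your proof along your own lines, the cleanest fix is exactly this: move the clause sinks inside the variable blocks as degree-$1$ pendants so that the level-$2$ sides become pairwise disconnected; then your bipartite-cut argument goes through verbatim.
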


\begin{proof}
 Consider an instance of 3-SAT, given by a set of $\ell$ clauses $C = \set{c_1,\dots,c_\ell}$ on $k$ variables $x_1, \dots, x_k$. We create a corresponding instance of the maximum concurrent contraflow problem as follows:
 \begin{enumerate}
  \item For each clause $c_i$ we create a node $c_i$, 
  \item for each variable $x_i$, create nodes $x_i^1$, $x_i^2$, $\overline{x}_i^1$, $\overline{x}_i^2$,$x_i^-$,$\overline{x}_i^-$, $d_i^-$, $\overline{d}_i^-$ and $d_i^+$.
  \item For a clause $c_i = x_{i_1} \vee x_{i_2} \vee \overline{x}_{i_3}$ we create edges $\set{c_i,x_{i_1}^1}$, $\set{c_i,x_{i_2}^1}$ and $\set{c_i,\overline{x}_{i_3}^1}$,
  \item for each variable $x_i$, create edges $\set{d_i^+,x_i^2}$, $\set{d_i^+,\overline{x}_i^2}$, $\set{x_i^1,d_i^-}$, $\set{\overline{x}_i^1,\overline{d}_i^-}$, $\set{x_i^2,x_i^-}$, $\set{\overline{x}_i^2,\overline{x}_i^-}$, $\set{x_i^1,x_i^2}$ and $\set{\overline{x}_i^1,\overline{x}_i^2}$. 
  \item Capacities are set to $\ell$ for each edge and transit times to $0$.
  \item There is a commodity for each variable $x_i$, with a supply of 2 at $d_i^+$ and demands of $-1$ at $d_i^-$, $\overline{d}_i^-$. Furthermore, there is a commodity for each clause $c_i = x_{i_1} \vee x_{i_2} \vee \overline{x}_{i_3}$, with a supply of 3 at the clause node $c_i$ and a demand of $-1$ at $x_{i_1}^-$, $x_{i_2}^-$ and $\overline{x}_{i_3}^-$.
 \end{enumerate}
 Notice that the resulting network -- an example of which is depicted in Fig.~\ref{figure2} -- has $\ell + 9k$ nodes and $3\ell + 8k$ edges, which is polynomial in the size of the 3-SAT instance.

 \begin{figure}[tb]
  \centering
  \begin{tikzpicture}[xscale=0.8]
   \splitvariableblock{0}{0}{1}
   \splitvariableblock{4.75}{0}{2}
   \node (dots) at (7.5,-1.5) {$\dots$};
   \splitvariableblock{10.25}{0}{k};
   \node[draw, rounded rectangle, fill=white,label={[label distance=-0mm]90:\tiny $3$}] (cl1) at (3.5,1.5) {$c_1 = \overline{x}_1 \vee x_2 \vee \overline{x}_k$};   
   \draw[longEdge] (cl1.200) -- ++(0,-0.375) -- (b1.45);
   \draw[longEdge] (cl1.south) -- ++(0,-0.375) -| (a2.north);
   \draw[longEdge] (cl1.east) -| (bk.north);
   \node[draw, rounded rectangle, fill=white,label={[label distance=-0mm]90:\tiny $3$}] (cl2) at (6.0,2.25) {$c_2 = \overline{x}_1 \vee \overline{x}_2 \vee x_k$};   
   \draw[longEdge] (cl2.west) -| (b1.north);   
   \draw[longEdge] (cl2.south) -- +(0,-0.75) -| (b2.north);   
   \draw[longEdge] (cl2.east) -| (ak.north);
   \node[draw, rounded rectangle, fill=white,label={[label distance=-0mm]90:\tiny $3$}] (clk) at (11.0,2.25) {$c_\ell = \dots$};
  \end{tikzpicture}
  \caption{The maximum multicommodity concurrent flow problem instance.\label{figure2}}
 \end{figure}
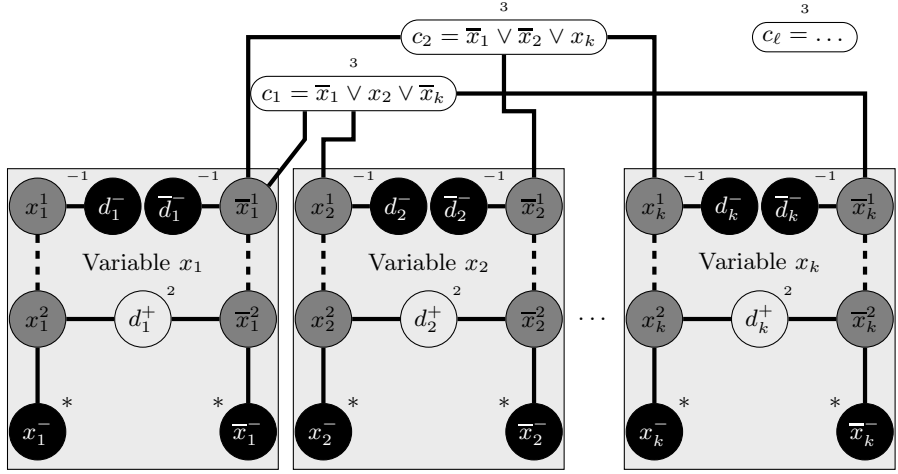

 \paragraph{YES-Instance $\to$ $\frac{1}{3}$-concurrent flow value.}

 If the 3-SAT instance is a YES-instance, then there is a variable assignment $x_i \in \set{0,1}$, $i = 1,\dots,k$ fulfilling all clauses. We use this assignment to define an orientation of the edges in our network. If a variable $x_i$ is assigned a value of 0, we orient the edge $\set{x_i^1,x_i^2}$ as $(x_i^2,x_i^1)$ and $\set{\overline{x}_i^1,\overline{x}_i^2}$ as $(\overline{x}_i^1,\overline{x}_i^2)$; if $x_i$ is assigned the value $1$, we orient edge $\set{\overline{x}_i^1,\overline{x}_i^2}$ as $(\overline{x}_i^2,\overline{x}_i^1)$ and $\set{x_i^1,x_i^2}$ as $(x_i^1,x_i^2)$. All other edges are oriented away from sources and towards sinks.
 Notice that:
 \begin{enumerate}
  \item Each clause commodity of a clause $c_i = x_{i_1} \vee x_{i_2} \vee \overline{x}_{i_3}$ can send flow to the nodes $x^1_{i_1}, x^1_{i_2}, \overline{x}^1_{i_3}$.
  \item Each clause is satisfied by our assignment, so there is a literal in each clause that is true.
  \item For this literal $x_i$, there is an edge directed from $x_i^1$ to $x_i^2$  (or $\overline{x}_i^1$ to $\overline{x}_i^2$, respectively).
  \item By construction of the instance, there is a demand for this clause commodity in $x_i^-$, which can be reached from $x_i^2$ (or $\bar{x}_i^-$ and $\bar{x}_i^2$, respectively).
 \end{enumerate}
 Therefore we can fulfill as much demand of a clause commodity as it has satisfied literals in our assignment, which is at least 1. Thus, we have a concurrent flow value of $\frac{1}{3}$ for these commodities. Now we need to consider the variable commodities. Since our assignment can only set $x_i$ to either 1 or 0, one of the edges $\set{x_i^1,x_i^2}$ and $\set{\overline{x}_i^1,\overline{x}_i^2}$ has been oriented as $(x_i^2,x_i^1)$ or $(\overline{x}_i^2,\overline{x}_i^1)$, respectively, in each variable block. This creates a path to send one flow unit of each variable commodity, giving us a concurrent flow value of $\frac{1}{2}$ for them, yielding a total concurrent flow value of $\frac{1}{3}$.

 \paragraph{Positive concurrent flow value $\to$ YES-Instance.}
 
 In order to have a positive concurrent flow value, at least one of the edges $\set{x_i^1,x_i^2}$ and $\set{\overline{x}_i^1,\overline{x}_i^2}$ needs to be oriented as $(x_i^2,x_i^1)$ or $(\overline{x}_i^2,\overline{x}_i^1)$ in each variable block -- otherwise there is no way to route any flow from the variable commodity. Thus, we can define an assignment by setting $x_i = 0$ if $\set{x_i^1,x_i^2}$ has been oriented as $(x_i^2,x_i^1)$ and $x_i = 1$ otherwise. Notice that if $\set{x_i^1,x_i^2}$ is oriented as $(x_i^2,x_i^1)$, there is no flow reaching $x_i^-$ (the edges adjacent to $d_i^+$ cannot be used to reach $x_i^-$, or no flow of the variable commodity could be sent). But since we have a positive concurrent flow value, there is flow from every clause commodity reaching one of its sinks. Such flow has -- by construction of the network -- to travel through the block of one of the variables contained in the clause. More specifically, it has to traverse the $(x_i^1,x_i^2)$ or $(\overline{x}_i^1,\overline{x}_
i^2)$ edge, depending on whether the variable appears 
negated in the clause or not. Thus, this flow travels through an edge representing an fulfilled literal of its clause in the assignment derived from the edge orientation. Thus, the instance has a satisfying assignment, making it a YES-instance. \qed
\end{proof}

\begin{theorem}
 \label{thm:hardness:mc:quickest}
 The quickest multicommodity contraflow problem cannot be approximated, unless $P = NP$. This holds even in the case of zero transit times.
\end{theorem}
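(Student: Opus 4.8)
The plan is to reuse a trimmed-down variant of the $3$-SAT reduction behind Theorem~\ref{thm:hardness:mc:concurrent}, exploiting the following dichotomy: with all transit times equal to zero, an orientation $\orient{N}$ admits a multicommodity transshipment over time meeting all supplies and demands within \emph{some} finite time horizon if and only if, for every commodity, each of its sources can reach each of its sinks along a directed path in $\orient{N}$. Indeed, capacities are finite and positive, so directed paths for all commodities let us route the (constant, finite) supplies at a sufficiently small constant rate for a long enough time; conversely, a feasible transshipment immediately yields such paths by flow decomposition. Hence, if the reduction is arranged so that \emph{every} commodity is simultaneously routable in some orientation exactly when the formula is satisfiable, then the optimal value of the quickest multicommodity contraflow problem is finite for YES-instances and $+\infty$ for NO-instances. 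No algorithm with a finite approximation guarantee can distinguish these two cases, so the problem is inapproximable (indeed, even deciding feasibility is $NP$-hard) unless $P=NP$; zero transit times are used throughout.

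It remains to design such a reduction. As in the proof of Theorem~\ref{thm:hardness:mc:concurrent}, for every variable $x_j$ we keep the two \emph{literal edges} $\set{x_j^1,x_j^2}$ and $\set{\overline{x}_j^1,\overline{x}_j^2}$, with the convention that orienting a literal edge ``forward'' (from the ${}^1$-node to the ${}^2$-node) encodes that the literal is true, and we keep, for every $x_j$, a \emph{consistency commodity} of unit supply and demand that can be routed if and only if at least one of the two literal edges of $x_j$ is oriented backward (so at most one of them is forward). The difference is that each \emph{clause commodity} is replaced by one with a single source $c_i^+$ of supply $1$ and a single sink $c_i^-$ of demand $-1$, wired to the literal edges of the clause $c_i$ only (through private, degree-bounded intermediate nodes), so that routing this unit requires using some literal edge of $c_i$ in the forward direction. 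All capacities are set to a value (such as $\ell$, the number of clauses) large enough to carry all simultaneous traffic, all transit times are $0$, and the instance has size polynomial in the $3$-SAT instance.

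For the \textbf{YES}-direction, a satisfying assignment yields the \emph{consistent} orientation that makes exactly one literal edge of each variable forward (the one of the true literal); routing each consistency commodity through the backward literal edge of its variable, each clause commodity through a true literal of the clause, and orienting all private edges out of sources and into sinks, satisfies every commodity, and --- transit times being zero and capacities being generous --- this can be done within time horizon $1$, so the quickest time is finite. For the \textbf{NO}-direction, assume some orientation admits a feasible transshipment with any horizon. The consistency commodities force at most one forward literal edge per variable, so the rule ``$x_j$ is true iff $\set{x_j^1,x_j^2}$ is forward (false if $\set{\overline{x}_j^1,\overline{x}_j^2}$ is forward, arbitrary otherwise)'' defines a consistent assignment; since each clause commodity is routed, its clause has a literal edge oriented forward, hence a literal satisfied by this assignment. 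Thus the assignment satisfies every clause, contradicting that the instance is a NO-instance. Combining the two directions with the dichotomy above proves the theorem.

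The delicate point --- and the one I expect to be the main obstacle --- is establishing, in the NO-direction, that routing a clause commodity really does force a forward literal edge \emph{of that clause}, rather than being achievable along a ``spurious'' walk through other clause gadgets that borrows forward literal edges of unrelated variables. This is the same subtlety that already has to be handled in the proof of Theorem~\ref{thm:hardness:mc:concurrent}, and it is dealt with by keeping the terminal and private nodes of low degree (so that flow out of $c_i^+$ and into $c_i^-$ is immediately funneled onto the literal edges of $c_i$) together with a simultaneous capacity count that prevents all commodities from using such detours at once; adapting this bookkeeping to the unit-demand clause commodities needed here is the bulk of the work, but it is routine given the structure already present in Theorem~\ref{thm:hardness:mc:concurrent}. \qed
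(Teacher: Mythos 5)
Your high-level strategy is genuinely different from the paper's, and unfortunately the difference lands exactly on the step you yourself flag as ``the bulk of the work.'' You aim for a feasibility dichotomy: quickest time finite for YES-instances and $+\infty$ for NO-instances, justified by the (correct) observation that with zero transit times a transshipment meeting all supplies and demands exists within \emph{some} finite horizon if and only if every commodity's sources reach its sinks by directed paths in the orientation. The paper does not do this. Its instances remain feasible for every orientation; instead it introduces a tunable parameter $C$ (supplies $C$ at $d_i^+$ and $C^2$ at $\hat{d}_i^+$, demands $-\frac{1}{2}(C^2+C)$, capacities graded as $C^2$, $C$, $1$ and $\ell$) so that a satisfying assignment permits routing in $1$ time unit while any orientation of a NO-instance forces $C$ (or $C^2$) units of some variable commodity across edges of capacity $O(\ell)$, costing $\Theta(C/\ell)$ time. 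Inapproximability follows because $C$ is arbitrary, with no need to ever rule the detours out.

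The gap in your argument is that the tool you invoke to close the delicate NO-direction cannot work inside your own framework. Once feasibility is reduced to pure directed connectivity, capacities are irrelevant: a single spurious directed walk from a clause source to its sink, no matter how low its capacity or how heavily it is shared with other commodities, already makes that commodity routable within some finite horizon and collapses your finite-versus-$\infty$ gap. Hence ``a simultaneous capacity count that prevents all commodities from using such detours at once'' is not available to you; you would need a gadget in which \emph{no} orientation of a NO-instance simultaneously connects all commodities --- a purely combinatorial Steiner-orientation-type hardness statement that your sketch does not establish and that is not a routine adaptation of Theorem~\ref{thm:hardness:mc:concurrent} (whose NO-direction argument is itself tied to the concurrent-value objective, and whose variable commodity has two sinks, which under the quickest objective --- where every demand must be fully met --- would force \emph{both} literal edges backward and break the clause routing). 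Until such a gadget is exhibited and its spurious walks are excluded combinatorially, the proof is incomplete; the paper's quantitative gap is precisely the device that makes this exclusion unnecessary.
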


\begin{proof}
 Consider an instance of 3-SAT, given by a set of $\ell$ clauses $C = \set{c_1,\dots,c_\ell}$ on $k$ variables $x_1, \dots, x_k$. We create a corresponding instance of the quickest multicommodity contraflow problem as follows:
 \begin{enumerate}
  \item We create a super sink $c^-$ and for each clause $c_i$ we create a node $c_i$, 
  \item for each variable $x_i$, we create nodes $x_i^1$, $x_i^2$, $\overline{x}_i^1$, $\overline{x}_i^2$, $d_i^-$, $\overline{d}_i^-$, $d_i^+$ and $\hat{d}_i^+$.
  \item For each clause $c_i = x_{i_1} \vee x_{i_2} \vee \overline{x}_{i_3}$ we create edges $\set{c_i,x_{i_1}^1}$, $\set{c_i,x_{i_2}^1}$ and $\set{c_i,\overline{x}_{i_3}^1}$,
  \item for each variable $x_i$, we create edges $\set{d_i^+,x_i^2}$, $\set{d_i^+,\overline{x}_i^2}$, $\set{x_i^1,d_i^-}$, $\set{\overline{x}_i^1,\overline{d}_i^-}$, $\set{x_i^2,c^-}$, $\set{\overline{x}_i^2,c^-}$, $\set{x_i^1,x_i^2}$, $\set{\overline{x}_i^1,\overline{x}_i^2}$, $\set{\hat{d}_i^+,d_i^-}$ and $\set{\hat{d}_i^+,\overline{d}_i^-}$.
  \item Capacities are set to $C^2$ for edges leaving $\hat{d}_i^+$, to $C$ for the other edges completely inside a variable block, to $1$ for edges entering a variable block and $\ell$ for all other edges,
  \item supplies are 1 for each clause node $c_i$, $C$ for each $d_i^+$ node, $C^2$ for each $\hat{d}_i^+$ node and zero for all other nodes,
  \item demands are $-\frac{1}{2}(C^2+C)$ for each $d_i^-$, $\overline{d}_i^-$. There is a commodity for the four $d_i$ nodes of each variable, and each clause node has supply of an own commodity, and the supersink gets a demand of $-1$ for each clause commodity.
 \end{enumerate}
 Notice that the resulting network -- an example of which is depicted in Fig.~\ref{figure3} -- has $\ell + 8k + 1$ nodes and $3\ell + 10k$ edges, which is polynomial in the size of the 3-SAT instance.
 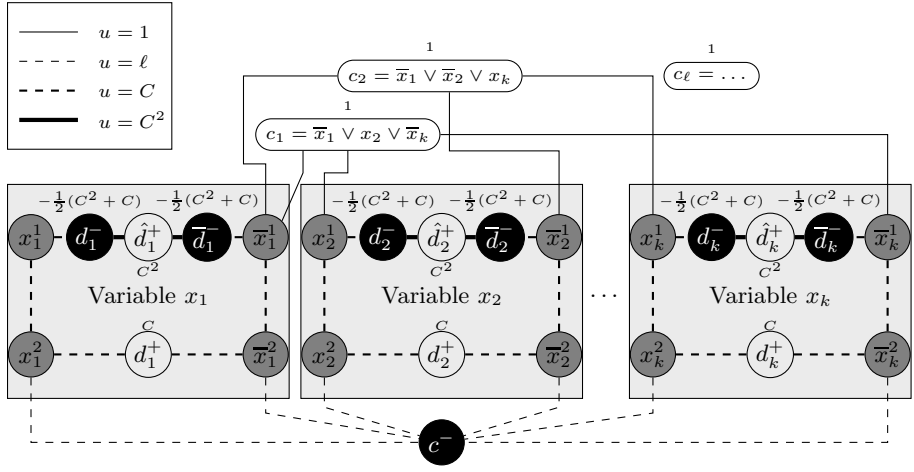
\begin{figure}[tb]  
  \centering
  \begin{tikzpicture}[scale=0.78,xscale=0.8]
   \extvariableblock{0}{0}{1}
   \extvariableblock{6.25}{0}{2}
   \node (dots) at (9.75,-1.00) {$\dots$};
   \extvariableblock{13.25}{0}{k};
   \node[draw, rounded rectangle, fill=white,label={[label distance=-0mm]90:\tiny $1$}] (cl1) at (4.25,1.75) {\scriptsize $c_1 = \overline{x}_1 \vee x_2 \vee \overline{x}_k$};   
   \draw[longEdge,thin] (cl1.200) -- ++(0,-0.375) -- (b1.45);
   \draw[longEdge,thin] (cl1.south) -- ++(0,-0.375) -| (a2.north);
   \draw[longEdge,thin] (cl1.east) -| (bk.north);
   \node[draw, rounded rectangle, fill=white,label={[label distance=-0mm]90:\tiny $1$}] (cl2) at (6.0,2.75) {\scriptsize $c_2 = \overline{x}_1 \vee \overline{x}_2 \vee x_k$};   
   \draw[longEdge,thin] (cl2.west) -- ++(-2,0) -- ++(0,-1.5) -| (b1.north);   
   \draw[longEdge,thin] (cl2.320) -- +(0,-1.00) -| (b2.north);
   \draw[longEdge,thin] (cl2.east) -| (ak.north);
   \node[draw, rounded rectangle, fill=white,label={[label distance=-0mm]90:\tiny $1$}] (clk) at (12.0,2.75) {\scriptsize $c_\ell = \dots$};
   \node[variableSource,minimum size=6mm] (s) at (6.25,-3.5) {\small $c^-$};
   \draw[shortEdge,thin] (c1) |- (s);
   \draw[shortEdge,thin] (d1) -- ++(0,-1) -- (s);
   \draw[shortEdge,thin] (c2) -- ++(0,-0.9) -- (s);
   \draw[shortEdge,thin] (d2) -- ++(0,-0.9) -- (s);
   \draw[shortEdge,thin] (ck) -- ++(0,-1) -- (s);
   \draw[shortEdge,thin] (dk) |- (s);
   \draw[draw=black, fill=white] (-3.0,4.0) rectangle (0.5,1.5);
   \draw[longEdge,thin] (-2.75,3.5) -- (-1.5,3.5); \node[anchor=west] (text) at (-1.25,3.5) {\scriptsize $u = 1$};
   \draw[shortEdge,thin] (-2.75,3.0) -- (-1.5,3.0); \node[anchor=west] (text) at (-1.25,3.0) {\scriptsize $u = \ell$};
   \draw[longEdge,thick,dashed] (-2.75,2.5) -- (-1.5,2.5); \node[anchor=west] (text) at (-1.25,2.5) {\scriptsize $u = C$};
   \draw[longEdge] (-2.75,2.0) -- (-1.5,2.0);  \node[anchor=west] (text) at (-1.25,2.0) {\scriptsize $u = C^2$};
  \end{tikzpicture}
  \caption{The quickest multicommodity contraflow flow problem instance.\label{figure3}}
 \end{figure}

 \paragraph{YES-Instance $\to$ 1 time unit required.}

 If the 3-SAT instance is a YES-instance, then there is a variable assignment $x_i \in \set{0,1}$, $i = 1,\dots,k$ fulfilling all clauses. We use this assignment to define an orientation of the edges in our network. If a variable $x_i$ is assigned a value of 0, we orient the edge $\set{x_i^1,x_i^2}$ as $(x_i^2,x_i^1)$ and $\set{\overline{x}_i^1,\overline{x}_i^2}$ as $(\overline{x}_i^1,\overline{x}_i^2)$; if $x_i$ is assigned the value $1$, we orient edge $\set{\overline{x}_i^1,\overline{x}_i^2}$ as $(\overline{x}_i^2,\overline{x}_i^1)$ and $\set{x_i^1,x_i^2}$ as $(x_i^1,x_i^2)$. All other edges are oriented away from sources and towards sinks.
 Notice that:
 \begin{enumerate}
  \item Each clause commodity of a clause $c_i = x_{i_1} \vee x_{i_2} \vee \overline{x}_{i_3}$ can send flow to the nodes $x^1_{i_1}, x^1_{i_2}, \overline{x}^1_{i_3}$.
  \item Each clause is satisfied by our assignment, so there is a literal in each clause that is true.
  \item For this literal $x_i$, there is an edge directed from $x_i^1$ to $x_i^2$ (or $\overline{x}_i^1$ to $\overline{x}_i^2$, respectively).
  \item By construction of the instance, there is a demand for this clause commodity in $c^-$, which can be reached from $x_i^2$ / $\overline{x}_i^2$.
 \end{enumerate}
 Therefore we can fulfill the demand of a clause commodity if it has satisfied literals in our assignment, which it does.  
 Now we need to consider the variable commodities. Since our assignment can only set $x_i$ to either 1 or 0, one of the edges $\set{x_i^1,x_i^2}$ and $\set{\overline{x}_i^1,\overline{x}_i^2}$ has been oriented as $(x_i^2,x_i^1)$ or $(\overline{x}_i^2,\overline{x}_i^1)$, respectively, in each variable block. This creates a path to send $C$ flow units from $d_i^+$ to one of its sinks, and the remaining demands can be covered by supply from $\hat{d}_i^+$. Since the transit times are zero, all of this can be done in 1 time unit.

 \paragraph{NO-instance $\to$ $\Theta(\frac{C}{\ell})$ time units required.}
 In a NO-instance, there is no variable assignment that satisfies all clauses. This means that we need either to orient $\set{x_i^1,x_i^2}$ as $(x_i^1,x_i^2)$ and $\set{\overline{x}_i^1,\overline{x}_i^2}$ as $(\overline{x}_i^1,\overline{x}_i^2)$, or we need to have a clause commodity use the wrong edge in a variable block (\ie, the one of the literal not contained in the clause). If we do the former, this means that we have to route the $C$ units of supply from $d_i^+$ over $c^-$, which requires at least $C/(2\ell)$ time units because of the capacities. If we do the latter, we need to switch either the direction of one of the outgoing edges of $\hat{d}_i^+$, once again causing at least $C$ time units to be necessary or we need to switch the direction of one of the incoming edges to the variable block, with a similar result. \qed 
\end{proof}

\paragraph{\textbf{Acknowledgements.}} We thank the anonymous reviewers for their helpful comments.

\bibliography{literature}

\begin{thebibliography}{10}

\bibitem{BaumannKoehler07}
N.~Baumann and E.~K{\"o}hler.
\newblock Approximating earliest arrival flows with flow-dependent transit
  times.
\newblock {\em Discrete Applied Mathematics}, 155:161--171, 2007.

\bibitem{BDK93}
R.~E. Burkard, K.~Dlaska, and B.~Klinz.
\newblock The quickest flow problem.
\newblock {\em Mathematical Methods of Operations Research}, 37:31--58, 1993.

\bibitem{FLSK07}
L.~Fleischer and M.~Skutella.
\newblock Quickest flows over time.
\newblock {\em SIAM Journal on Computing}, 36:1600--1630, 2007.

\bibitem{FLTAR98}
L.~K. Fleischer and {\'E}.~Tardos.
\newblock Efficient continuous-time dynamic network flow algorithms.
\newblock {\em Operations Research Letters}, 23:71--80, 1998.

\bibitem{FoFu:62}
L.~R. Ford and D.~R. Fulkerson.
\newblock {\em {Flows in Networks}}.
\newblock Princeton University Press, Princeton, New Jersey, 1962.

\bibitem{GAL59}
D.~Gale.
\newblock Transient flows in networks.
\newblock {\em Michigan Mathematical Journal}, 6:59--63, 1959.

\bibitem{GrKaScSc:12}
M.~Gro\ss{}, J.-P.~W. Kappmeier, D.~R. Schmidt, and M.~Schmidt.
\newblock Approximating earliest arrival flows in arbitrary networks.
\newblock In L.~Epstein and P.~Ferragina, editors, {\em Algorithms -- ESA
  2012}, volume 7501 of {\em Lecture Notes in Computer Science}, pages
  551--562. Springer Berlin Heidelberg, 2012.

\bibitem{MatEtAl:11}
M.~Hausknecht, T.-C. Au, P.~Stone, D.~Fajardo, and T.~Waller.
\newblock Dynamic lane reversal in traffic management.
\newblock In {\em 14th International IEEE Conference on Intelligent
  Transportation Systems (ITSC)}, pages 1929--1934, 2011.

\bibitem{Hirsch1989}
M.~D. Hirsch, C.~H. Papadimitriou, and S.~A. Vavasis.
\newblock Exponential lower bounds for finding brouwer fix points.
\newblock {\em Journal of Complexity}, 5:379--416, 1989.

\bibitem{HOPTAR00}
B.~Hoppe and {\'E}.~Tardos.
\newblock The quickest transshipment problem.
\newblock {\em Mathematics of Operations Research}, 25:36--62, 2000.

\bibitem{Ho:95}
B.~E. Hoppe.
\newblock {\em {Efficient Dynamic Network Flow Algorithms}}.
\newblock PhD thesis, Cornell University, 1995.

\bibitem{KiSh:05}
S.~Kim and S.~Shekhar.
\newblock {Contraflow network reconfiguration for evaluation planning: A
  summary of results}.
\newblock In {\em {Proceedings of the 13th Annual ACM International Workshop on
  Geographic Information Systems}}, pages 250--259, 2005.

\bibitem{KlinzWoeginger04}
B.~Klinz and G.~J. Woeginger.
\newblock Minimum cost dynamic flows: The series parallel case.
\newblock {\em Networks}, 43:153--162, 2004.

\bibitem{Papadimitriou1994}
C.~H. Papadimitriou.
\newblock On the complexity of the parity argument and other inefficient proofs
  of existence.
\newblock {\em Journal of Computer and System Sciences}, 48:498--532, 1994.

\bibitem{RebEtAl:10}
S.~Rebennack, A.~Arulselvan, L.~Elefteriadou, and P.~M. Pardalos.
\newblock {Complexity analysis for maximum flow problems with arc reversals}.
\newblock {\em Journal of Combinatorial Optimization}, 19:200--216, 2010.

\bibitem{Ro:39}
H.~E. Robbins.
\newblock A theorem on graphs, with an application to a problem of traffic
  control.
\newblock {\em The American Mathematical Monthly}, 46:281--283, 1939.

\bibitem{-S09}
M.~Skutella.
\newblock An introduction to network flows over time.
\newblock In W.~Cook, L.~{Lov\'{a}sz}, and J.~Vygen, editors, {\em Research
  Trends in Combinatorial Optimization}, pages 451--482. Springer, 2009.

\bibitem{Tjandra2003}
S.~A. Tjandra.
\newblock {\em Dynamic network optimization with application to the evacuation
  problem}.
\newblock PhD thesis, Technical University of Kaiserslautern, 2003.

\bibitem{TuZi:04}
H.~Tuydes and A.~Ziliaskopoulos.
\newblock Network re-design to optimize evacuation contraflow.
\newblock In {\em Proceedings of the 83rd Annual Meeting of the Transportation
  Research Board}, Washington, DC, 2004.

\bibitem{TuZi:06}
H.~Tuydes and A.~Ziliaskopoulos.
\newblock {Tabu-based heuristic approach for optimization of network evacuation
  contraflow}.
\newblock {\em Transportation Research Record}, 1964:157--168, 2006.

\bibitem{Wo:01}
B.~Wolshon.
\newblock One-way-out: Contraflow freeway operation for hurricane evacuation.
\newblock {\em Natural Hazards Review}, 2:105--112, 2001.

\bibitem{WoUrLe:02}
B.~Wolshon, E.~Urbina, and M.~Levitan.
\newblock {National review of hurricane evacuation plans and policies}.
\newblock Technical report, {LSU Hurricane Center, Louisiana State University,
  Baton Rouge, Louisiana}, 2002.

\end{thebibliography}

\end{document}